\newtheorem{theorem}{Theorem}
\newtheorem{lemma}{Lemma}
\newtheorem{proposition}{Proposition}
\newtheorem{definition}{Definition}
\algrenewcommand\algorithmicrequire{\textbf{Input:}}
\algrenewcommand\algorithmicensure{\textbf{Output:}}
\newcounter{rownumbers}
\title{A New Geometric Representation for 3D Bijective Mappings and Applications}
\author{\hspace{1mm}Qiguang Chen\\
	Department of Mathematics\\
    The Chinese University of Hong Kong\\
	\texttt{qgchen@math.cuhk.edu.hk} \\
	\And
	\hspace{1mm}Lok Ming Lui \\
	Department of Mathematics\\
    The Chinese University of Hong Kong\\
	\texttt{lmlui@math.cuhk.edu.hk} \\
}
\begin{document}
\maketitle

\begin{abstract}
Three-dimensional (3D) mappings are fundamental in various scientific and engineering applications, including computer-aided engineering (CAE), computer graphics, and medical imaging. They are typically represented and stored as three-dimensional coordinates to which each vertex is mapped. With this representation, manipulating 3D mappings while preserving desired properties becomes challenging. In this work, we present a novel geometric representation for 3D bijective mappings, termed {\it 3D quasiconformality (3DQC)}, which generalizes the concept of Beltrami coefficients from 2D to 3D spaces. This geometric representation facilitates the scientific computation of 3D mapping problems by capturing local geometric properties in 3D mappings. We derive a partial differential equation (PDE) that links the 3DQC to its corresponding mapping. This PDE is discretized into a symmetric positive-definite linear system, which can be efficiently solved using the conjugate gradient method. 3DQC offers a powerful tool for manipulating 3D mappings while maintaining their desired geometric properties. Leveraging 3DQC, we develop numerical algorithms for sparse modeling and numerical interpolation of bijective 3D mappings, facilitating the efficient processing, storage, and manipulation of complex 3D mappings while ensuring bijectivity. Extensive numerical experiments validate the effectiveness and robustness of our proposed methods.
    
\end{abstract}

\keywords{3D mappings, 3D quasiconformality, sparse modeling, interpolation, bijectivity}

\section{Introduction}\label{sec:intro}

Mapping data can be seen in various scientific computing and engineering scenarios. For instance, in scientific computing, the results of domain parameterization is represented as mappings that transform complex domains into simpler parameter domains \cite{choi2015flash,lyu2023two}, thereby enabling more efficient computations. In finite element analysis, deforming a mesh is crucial for solving partial differential equations (PDEs) in dynamic domains, and the deformation is represented as a mapping. In image registration task, where the objective is to align different images into a common coordinate system, the alignment is a mapping \cite{zhang2015variational,zhang20203d}. Additionally, topology-preserving image segmentation can be regarded as a task to seek a bijective mapping that transforms a template mask into a segmentation mask \cite{DaopingZhang2021Taci,Zhang2025star}. In medical image analysis, computing deformations of anatomical structures is vital for monitoring disease progression, as it allows clinicians to assess changes in the shape and size of organs over time \cite{CHAN2016177,guo2023automatic}. These deformations can also be represented as mappings between anatomical structures. 
 
One of the major challenges commonly encountered in the scientific computation of manipulating mapping data is finding a geometric representation of mappings that is intuitive and easy to manipulate. In 2D cases, quasiconformal theories have been widely applied to address these challenges. In particular, the Beltrami coefficient (BC) has been utilized to represent 2D bijective mappings. The BC captures local geometric distortion under the corresponding 2D bijective mappings, providing a powerful tool for analyzing and manipulating these mappings. From the BC, we can reconstruct the corresponding mapping by solving Beltrami's equation. Using BCs to model the space of orientation-preserving bijective mappings offers several significant advantages. First, it allows for a sparse representation of bijective mappings. By applying the Fourier or wavelet transform to the Beltrami coefficients, a bijective map can be approximated using a truncated set of Fourier or wavelet coefficients. This sparsity is essential for reducing the computational complexity and improving the efficiency of scientific computing problems involving bijective mappings. Secondly, since BCs are geometric quantities that measure local geometric distortion under a mapping, the geometry can be effectively controlled by formulating the task as an optimization problem over the space of BCs. Thirdly, the use of BCs facilitates interpolation in the space of bijective mappings. Specifically, a homotopy that interpolates between two bijective mappings while preserving the bijectivity property can be computed by interpolating between the corresponding BCs. Interpolation is a fundamental operation in scientific computing for problems involving mappings. For instance, mapping interpolation is essential for dynamic mesh generation in numerical simulations on moving interfaces. Moreover, mapping interpolation plays a vital role in optimization problems over mapping spaces, providing the necessary continuity and regularity for gradient-based optimization methods. The Beltrami coefficient (BC) representation has proven to be an effective tool for characterizing and computing 2D mappings. However, this framework is inherently limited to two-dimensional mappings. 

To overcome this limitation, we propose a novel 3D geometric representation for 3D mappings that generalizes the BC concept to three dimensions. Specifically, the following three challenges related to the scientific computation involving 3D mappings will be addressed:

\begin{enumerate}
    \item \textbf{Geometric Representation of 3D bijective mappings:} The objective is to define an operator $\mathcal{R}: \mathcal{M} \to \mathcal{Q}$, where $\mathcal{M}$ represents the space of 3D bijective mappings and $\mathcal{Q}$ represents the space of geometric representations. For every $f \in \mathcal{M}$, the operator $\mathcal{R}(f)$ should effectively capture the geometric information of $f$. Additionally, $\mathcal{R}$ must be reconstructible. That is, there should exist an operator $\mathcal{F}: \mathcal{Q} \to \mathcal{M}$ such that: $\mathcal{F} \circ \mathcal{R}(f) = f, \quad \forall f \in \mathcal{M}$. This ensures a two-way correspondence between the spaces $\mathcal{M}$ and $\mathcal{Q}$, allowing smooth transitions between 3D bijective mappings and their geometric representations. To solve this problem, we propose a novel 3D quasiconformal representation that describes the local geometric distortion induced by a 3D mapping. The representation is reconstructible. Specifically, we derive a partial differential equation (PDE) that establishes the relationship between the proposed 3D quasiconformal representation and its corresponding mapping. By discretizing this PDE, we obtain a symmetric positive definite linear system, which allows the 3D mapping to be efficiently reconstructed from its representation.
    \item \textbf{Interpolation in 3D Bijective mappings:} The objective is to construct a homotopy $\mathcal{H} : \Omega_1 \times [0,1] \to \Omega_2$ between two 3D bijective mappings, $f_1 : \Omega_1 \to \Omega_2$ and $f_2 : \Omega_1 \to \Omega_2$, such that $\mathcal{H}(\cdot, 0) = f_1$, $\mathcal{H}(\cdot, 1) = f_2$, and $\mathcal{H}(\cdot, t)$ remains a bijection for all $t \in [0,1]$. In this paper, we propose an algorithm to compute $\mathcal{H}$ using an interpolation process over the space of 3D quasiconformal representations. By preserving certain properties of 3D quasiconformal representations, we ensure the construction of a continuous flow of bijective mappings between $f_1$ and $f_2$.
    \item \textbf{Sparse Modelling in 3D Bijective mappings:} The objective is to derive a sparse approximation of a 3D bijective mapping. In this work, we propose an algorithm for sparse modelling of 3D bijective mappings, leveraging the eigen-decomposition of 3D quasiconformal representations via the Laplace-Beltrami operator. High-frequency components are truncated, retaining only the low-frequency components. The proposed algorithm achieves a sparse approximation of 3D bijective mappings with high accuracy while preserving bijectivity.

\end{enumerate}

\smallskip

Extensive experiments have been carried out to test the performance of our proposed algorithms. Results demonstrate the effectiveness of our proposed methods. The remainder of the paper is organized as follows: Section~\ref{sec:contributions} outlines the main contributions of this work. Section~\ref{sec:related_works} provides an overview of related works. In section~\ref{sec:background}, we present the mathematical background necessary for understanding the proposed approach. Section~\ref{sec:continuous} introduces the 3D quasiconformal representation of 3D mappings, and a partial differential equation that connects the representation and its corresponding mapping, after which the discretized 3D quasiconformal representation and an efficient reconstruction algorithm are proposed in section~\ref{sec:discrete}. Section~\ref{sec:analysis} provides a detailed analysis of the proposed framework. Applications of the proposed framework are discussed in section~\ref{sec:app}, and the experimental results of the reconstruction algorithm and applications are shown in section~\ref{sec:experiments}. Finally, the paper is concluded in section~\ref{sec:conclusion}.

\section{Contributions}\label{sec:contributions} 
The main contributions of this paper are outlined as follows. 

\medskip

\begin{enumerate}
    \item We propose an effective 3D quasiconformal representation for 3D bijective mappings, which quantifies local geometric distortions.
    \item We systematically investigate the relationship between the 3D quasiconformal representation and its corresponding 3D mapping. Specifically, we derive a partial differential equation (PDE) that establishes a connection between the proposed 3D quasiconformal representation and its associated mapping.
    \item By discretizing the derived PDE, we develop a reconstruction scheme to recover the 3D bijective mapping from its corresponding 3D quasiconformal representation. The properties of the discretized linear system are rigorously analyzed to identify a suitable numerical scheme for efficient solution.
    \item Building on the proposed framework, we propose a numerical interpolation method for 3D bijective mappings.
    \item Additionally, we introduce a sparse modeling framework for 3D bijective mappings based on the developed algorithm.
    
\end{enumerate}

\section{Related works}\label{sec:related_works}
The computational method of conformal mapping has been widely used in computer graphics, with early contributions from Desbrun et al. \cite{desbrun2002intrinsic} and Lévy et al. \cite{levy2002least}. Gu et al. \cite{gu2004genus} introduced the use of conformal geometry frameworks for surface processing tasks, which has led to significant developments in the field. For example, Zayer, Rössl, and Seidel \cite{zayer2005discrete} presented implicit generalizations of these ideas. Lui and his coauthors extended the framework to quasi-conformal mappings, proposing tools like mapping compression \cite{lui2010compression} and the Linear Beltrami Solver (LBS) \cite{lui2013texture}. The LBS method has been applied to fast spherical and disk conformal parameterization techniques \cite{choi2015flash, choi2015fast} and various registration algorithms for different applications \cite{chen2019image, lam2014genus, lam2015landmark, lam2014landmark, lam2015quasi, lui2014teichmuller, lui2014geometric, lui2010shape, lui2012optimization, qiu2020inconsistent, wen2015landmark, yung2018efficient, zeng2014surface, zhang2019new, zhang2014automatic, zhang2018novel}. It has also been used in image segmentation \cite{chan2018topology, siu2020image, DaopingZhang2021Taci, zhang2021topology_3d} and shape analysis \cite{CHAN2016177, chan2020quasi, choi2020tooth, choi2020shape, lui2013shape, meng2016tempo}. More recently, learning-based methods based on LBS \cite{chen2021deep, guo2023automatic} have been proposed for topology-preserving registration in 2D space.

Despite these advances, extending quasi-conformal methods from 2D to 3D data has not been straightforward, mainly due to limitations in the underlying theory. Some efforts have been made to extend 2D quasi-conformal theory to higher dimensions. Lee et al. \cite{lee2016landmark} defined a generalized conformality distortion quantity to measure the dilation of $n$-dimensional quasi-conformal maps and proposed optimization algorithms for landmark-matching transformations. Zhang et al. \cite{zhang2022unifying} developed a general framework for $n$-dimensional mapping problems using this quantity, and Huang et al. \cite{huang2024topology} proposed a measure consistent with the classical 2D Beltrami coefficient. While these quantities are effective for measuring and controlling mapping distortions, they lack the geometric information needed to reconstruct 3D mappings. As a result, they cannot fully serve as a 3D extension of the Beltrami coefficient.

This work aims to address this limitation by extending the concept of quasi-conformal mapping from 2D to 3D spaces. Specifically, we introduce a geometric representation, called 3D quasiconformality (3DQC), which generalizes the Beltrami coefficient to 3D mappings. This representation enables efficient manipulation and processing of 3D mappings while preserving their geometric properties.

\section{Mathematical background}\label{sec:background}
In this section, we provide a brief review of the mathematical background relevant to this work.

\subsection{Polar decomposition}

In this paper, the left polar decomposition for real rectangular matrices is utilized. The set of $n \times n$ real matrices is denoted as $\mathcal{M}_n$. The left polar decomposition of a matrix $J \in \mathcal{M}_n$ is a factorization of the form $J = UP$, where $U \in \mathcal{M}_n$ is an orthogonal matrix, and $P \in \mathcal{M}_n$ is a symmetric positive semi-definite matrix. Specifically, $P$ can be expressed as $P = \sqrt{J^T J}$.

To understand this decomposition, note that $J^T J$ is a symmetric positive semi-definite matrix. Using eigen decomposition, $J^T J$ can be factorized as  
\[
    J^T J = WDW^{-1},
\]  
where $W$ is an orthogonal matrix and $D$ is a diagonal matrix containing the eigenvalues of $J^T J$. Since $J^T J$ is symmetric, its eigenvalues are non-negative, and $\Sigma$ is defined as the diagonal matrix whose entries are the square roots of the corresponding entries in $D$. These diagonal entries of $\Sigma$ are the singular values of $J$. Thus,  
\[
    P = \sqrt{J^T J} = W\Sigma W^{-1}.
\]

This shows that any linear transformation from $\mathbb{R}^n$ to $\mathbb{R}^n$ can be expressed as a composition of a rotation/reflection and a dilation.

\subsection{2D Quasiconformal maps and Beltrami coefficients}
A surface with a conformal structure is known as a \emph{Riemann surface}. Given two Riemann surfaces $M$ and $N$, a map $f:M \rightarrow N$ is conformal if it preserves the surface metric up to a multiplicative factor called the conformal factor. Quasiconformal maps generalize conformal maps and are orientation-preserving diffeomorphisms between Riemann surfaces with bounded conformal distortion. This means their first-order approximation maps small circles to small ellipses with bounded eccentricity, as illustrated in Figure~\ref{fig:qcmapping}.

\begin{figure}
    \centering
    \begin{tikzpicture}[>=stealth]

  \begin{scope}[shift={(0,0)}]
    \draw[->] (-2,0) -- (2,0) node[below] {$x$};
    \draw[->] (0,-2) -- (0,2) node[left] {$y$};
    \draw[fill=blue!20,opacity=0.5] (0,0) circle (1);
    \fill (0,0) circle (0pt) node[below left, font=\scriptsize] {$O$};
    \draw (0,1) -- ++(-0.1,0) node[left, font=\scriptsize] {$1$};
    \draw (1,0) -- ++(0,-0.1) node[below, font=\scriptsize] {$1$};
    \begin{scope}[rotate=30]
        \draw[->, red] (0,0) -- (1,0) node[pos=0.8, above right, font=\scriptsize, black] {$\begin{pmatrix}\cos{\frac{\theta}{2}}\\ \sin{\frac{\theta}{2}}\end{pmatrix}$};
      \draw[->, blue] (0,0) -- (0,-1) node[pos=0.7, below right, font=\scriptsize, black] {$\begin{pmatrix}\sin{\frac{\theta}{2}}\\-\cos{\frac{\theta}{2}}\end{pmatrix}$};
    \end{scope}
  \end{scope}

  \begin{scope}[shift={(3,0)}]
    \draw[->, thick] (0,0) -- (1,0) node[midway, above] {$S$};
  \end{scope}

  \begin{scope}[shift={(7,0)}]
    \draw[->] (-2,0) -- (2,0) node[below] {$x$};
    \draw[->] (0,-2) -- (0,2) node[left] {$y$};
    \fill (0,0) circle (0pt) node[below left, font=\scriptsize] {$O$};
    \draw (0,1) -- ++(-0.1,0) node[left, font=\scriptsize] {$1$};
    \draw (1,0) -- ++(0,-0.1) node[below, font=\scriptsize] {$1$};
    \begin{scope}[rotate=30]
      \draw[fill=green!20,opacity=0.5] (0,0) ellipse (1.5 and 0.5);
      \draw[->, red] (0,0) -- (1.5,0) node[pos=0.8, above right, font=\scriptsize, black] {$(1+|\mu(p)|)\begin{pmatrix}\cos{\frac{\theta}{2}}\\ \sin{\frac{\theta}{2}}\end{pmatrix}$};
      \draw[->, blue] (0,0) -- (0,-0.5) node[pos=0.7, below right, font=\scriptsize, black] {$(1-|\mu(p)|)\begin{pmatrix}\sin{\frac{\theta}{2}}\\-\cos{\frac{\theta}{2}}\end{pmatrix}$};
    \end{scope}
  \end{scope}

\end{tikzpicture}
    \caption{Illustration of how the Beltrami coefficient $\mu$ measures the distortion of the stretch map $S$ in a quasiconformal mapping $f$, where a small circle is mapped to an ellipse with dilation $K = \frac{1+|\mu(p)|}{1-|\mu(p)|}$.}
    \label{fig:qcmapping}
\end{figure}

Mathematically, a map $f:\mathbb{C} \rightarrow \mathbb{C}$ is quasiconformal if it satisfies Beltrami's equation  
\[
\frac{\partial f}{\partial \Bar{z}} = \mu(z) \frac{\partial f}{\partial z},
\]  
for some complex-valued function $\mu$ with $\lVert \mu \rVert_\infty < 1$. The function $\mu$, called the \emph{Beltrami coefficient}, captures the local geometric distortion of $f$. Specifically, $f$ is conformal around a small neighborhood of a point $p$ if and only if $\mu(p) = 0$. Expanding $f$ locally around $p$, we have:  
\[
f(z) = f(p) + f_{z}(p)z + f_{\Bar{z}}(p)\Bar{z},
\]  
or equivalently,  
\[
f(z) = f(p) + f_{z}(p)\big(z + \mu(p) \Bar{z}\big).
\]  
This decomposition shows that $f$ consists of a translation mapping $p$ to $f(p)$, followed by a stretch map $S(z) = z + \mu \Bar{z}$, and finally a multiplication by the conformal map $f_z(p)$. Among these, $S(z)$ introduces distortion by mapping small circles to small ellipses, with the stretching controlled by $\mu(p)$.  

To analyze $S(z)$, let $z = \rho + i\tau$ and $\mu = re^{i\theta}$, where $r = |\mu|$. Representing $S(z)$ in matrix form, we have:  
\[
\left[I + r\begin{pmatrix}
    \cos{\theta} & -\sin{\theta} \\
    \sin{\theta} & \cos{\theta}
\end{pmatrix}
\begin{pmatrix}
    1 & 0 \\
    0 & -1
\end{pmatrix}\right] 
\begin{pmatrix}
    \rho \\
    \tau
\end{pmatrix} = 
\begin{pmatrix}
    1 + r\cos{\theta} & r\sin{\theta} \\
    r\sin{\theta} & 1 - r\cos{\theta}
\end{pmatrix}
\begin{pmatrix}
    \rho \\
    \tau
\end{pmatrix}.
\]  
The eigenvalues of this matrix are $1+|\mu|$ and $1-|\mu|$, with corresponding eigenvectors $(\cos{\frac{\theta}{2}}, \sin{\frac{\theta}{2}})^T$ and $(\sin{\frac{\theta}{2}}, -\cos{\frac{\theta}{2}})^T$, which are orthogonal to each other. The distortion, or dilation, is measured by $K = \frac{1+|\mu|}{1-|\mu|}$. The Beltrami coefficient $\mu$ thus provides essential information about the properties of a quasiconformal map.

\section{Geometric representation for 3D Bijective Mappings}\label{sec:continuous}
One of the primary challenges in scientific computing related to mapping problems is developing an effective geometric representation of mappings that is easy to manipulate. In this section, we will describe our proposed geometric representation, termed {\it 3D Quasiconformality}, designed specifically for 3D bijective mappings.

\subsection{3D Quasiconformality} In this subsection, we introduce and describe our proposed {\it 3D Quasiconformality (3DQC)} for 3D bijective mappings.  Let $\mathcal{M}$ denotes the space of 3D bijective mappings between two oriented and simply connected open subsets in $\mathbb{R}^3$. Our objective is to establish a correspondence between $\mathcal{M}$ and the space of 3DQCs $\mathcal{Q}$. Specifically, we seek to define a bijection $\mathcal{R}: \mathcal{M} \to \mathcal{Q}$, where each representation in $\mathcal{Q}$ captures the geometric properties of its corresponding mapping in $\mathcal{M}$.

Consider a diffeomorphism $f: \Omega_1 \to \Omega_2$, where $\Omega_1$ and $\Omega_2$ are open, oriented, and simply connected subsets of $\mathbb{R}^3$. Let $\textbf{J}_f(p)$ represent the Jacobian matrix of $f$ at a point $p \in \Omega_1$. Since $f$ is a diffeomorphism, it is invertible, and the determinant of the Jacobian, $\det(\textbf{J}_f(p))$, is strictly positive for every $p \in \Omega_1$. This condition guarantees that $f$ preserves the orientation of the mapping.

Let $\Sigma$ be the diagonal matrix containing the singular values $a, b, \text{ and }c$ of $\textbf{J}_f(p)$, where by convention, we require that $a \geq b \geq c$. By polar decomposition, we have
\begin{align}
\begin{split}
    \textbf{J}_f(p)&=U(p)P(p)\\
    P(p)&=\sqrt{\textbf{J}_f(p)^T \textbf{J}_f(p)}=W\Sigma W^{-1}
\end{split}
    \label{polar_de_equ}
\end{align}
where $U$ is a rotation matrix, because the determinant of $\textbf{J}_f(p)$ and $P(p)$ are greater than 0, implying that $\det(U(p))=1$. $a, b \text{ and } c$ are the eigenvalues of $P(p)$ representing the dilation information, as illustrated in Figure~\ref{dilation}, and $W$ is an orthogonal matrix containing eigenvectors for dilation.  

\begin{figure}[h!]
  \centering
  \includegraphics[width=0.65\textwidth]{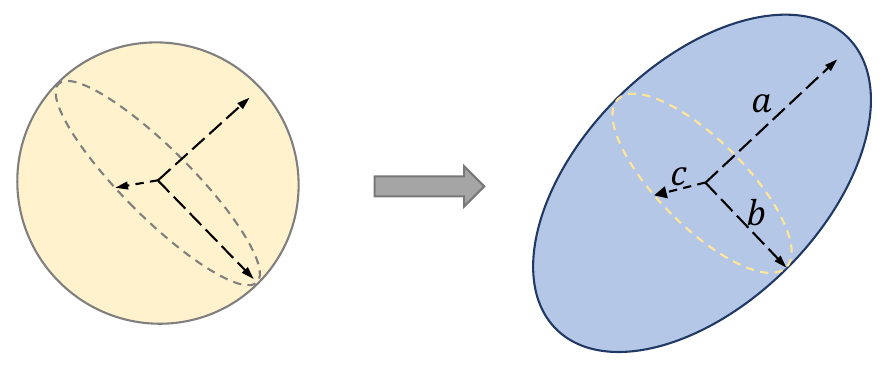}
  \caption{Illustration of the dilation of a 3D mapping. Three axes in the left diagram are mapped to those in the right diagram, maintaining their original directions with dilation factors a, b, and c. The normalized axes form the columns of matrix $W$.}
  \label{dilation}
  
\end{figure}

In 2D space, the Beltrami coefficient $\mu$ encodes the dilation factors $1+|\mu|$ and $1-|\mu|$, along with their respective directions, $(\cos{\frac{\theta}{2}}, \sin{\frac{\theta}{2}})^T$ and $(\sin{\frac{\theta}{2}}, -\cos{\frac{\theta}{2}})^T$, which are orthogonal to each other. Similarly, in 3D space, we have three dilation factors $a$, $b$, and $c$, along with their corresponding directions, represented by the column vectors of $W$. These directions are mutually orthogonal because $P(p)$ is symmetric. Thus, analogous to the Beltrami coefficient in 2D, $P(p)$ encodes the dilation information of the 3D mapping $f$.

Furthermore, the matrix \( W \) can be represented using Euler angles. To achieve this, the column vectors of \( W \) must form an orthonormal basis, and it is required that \( \det(W) = 1 \). However, since \( \det(W) \) is not necessarily positive, \( W \) may not initially be a rotation matrix. This issue can be resolved by randomly multiplying one of the columns of \( W \) by \(-1\), resulting in a modified matrix \( \tilde{W} \). This adjustment ensures that \( \det(\tilde{W}) = 1 \), making \( \tilde{W} \) a valid rotation matrix, while leaving the matrix \( P(p) \) unchanged.

To understand why $P(p)$ remains unaffected, consider that it can be expressed as:
\[
P(p) = a\vec{w}_1\vec{w}_1^T + b\vec{w}_2\vec{w}_2^T + c\vec{w}_3\vec{w}_3^T,
\]
where $\vec{w}_1$, $\vec{w}_2$, and $\vec{w}_3$ are the column vectors of $W$. Suppose we reverse the direction of one of these vectors, e.g., let $\tilde{\vec{w}}_1 = -\vec{w}_1$. Then:
\[
\tilde{\vec{w}}_1 \tilde{\vec{w}}_1^T = (-\vec{w}_1)(-\vec{w}_1)^T = \vec{w}_1\vec{w}_1^T.
\]

Thus, flipping the sign of $\vec{w}_1$ (or similarly, $\vec{w}_2$ or $\vec{w}_3$) does not alter $P(p)$. Consequently, there always exists an orthogonal matrix $\tilde{W}$ that satisfies $\det(\tilde{W}) = 1$, making it a valid rotation matrix. 

As this step is almost trivial, for brevity, we will no longer distinguish between $\tilde{W}$ and $W$, and use $W$ to denote the orthogonal rotation matrix produced by this step directly.

Now, $W$ can be expressed in the following form:
\begin{small}
\begin{equation}
    \begin{aligned}
        W &= \begin{pmatrix}
            r_{11} & r_{12} & r_{13} \\
            r_{21} & r_{22} & r_{23} \\
            r_{31} & r_{32} & r_{33}
        \end{pmatrix} = R_z(\theta_z) R_y(\theta_y) R_x(\theta_x) \\
        &= \begin{pmatrix}
            \cos{\theta_z} & -\sin{\theta_z} & 0 \\
            \sin{\theta_z} & \cos{\theta_z} & 0 \\
            0 & 0 & 1
        \end{pmatrix}
        \begin{pmatrix}
            \cos{\theta_y} & 0 & \sin{\theta_y} \\
            0 & 1 & 0 \\
            -\sin{\theta_y} & 0 & \cos{\theta_y}
        \end{pmatrix}
        \begin{pmatrix}
            1 & 0 & 0 \\
            0 & \cos{\theta_x} & -\sin{\theta_x} \\
            0 & \sin{\theta_x} & \cos{\theta_x}
        \end{pmatrix} \\
        &= \begin{pmatrix}
            \cos{\theta_z}\cos{\theta_y} & \cos{\theta_z}\sin{\theta_y}\sin{\theta_x} - \sin{\theta_z}\cos{\theta_x} & \cos{\theta_z}\sin{\theta_y}\cos{\theta_x} + \sin{\theta_z}\sin{\theta_x} \\
            \sin{\theta_z}\cos{\theta_y} & \sin{\theta_z}\sin{\theta_y}\sin{\theta_x} + \cos{\theta_z}\cos{\theta_x} & \sin{\theta_z}\sin{\theta_y}\cos{\theta_x} - \cos{\theta_z}\sin{\theta_x} \\
            -\sin{\theta_y} & \cos{\theta_y}\sin{\theta_x} & \cos{\theta_y}\cos{\theta_x}
        \end{pmatrix}.
    \end{aligned}
    \label{rotation_matrix}
\end{equation}
\end{small}

The Euler angles $\theta_x$, $\theta_y$, and $\theta_z$ can then be determined as:
\begin{equation}
    \begin{aligned}
        \theta_x &= \text{atan2}(r_{32}, r_{33}), \\
        \theta_y &= \text{atan2}(-r_{31}, \sqrt{r_{32}^2 + r_{33}^2}), \\
        \theta_z &= \text{atan2}(r_{21}, r_{11}).
    \end{aligned}
\end{equation}

The geometric interpretation of these Euler angles is illustrated in Figure~\ref{eu_ang}.

\begin{figure}[h!]
  \centering
  \includegraphics[width=0.75\textwidth]{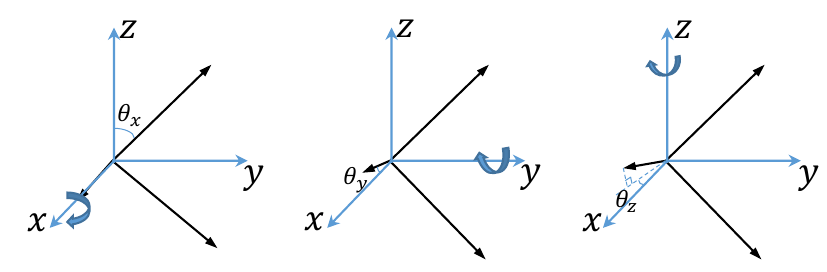}
  \caption{Illustration of the Euler angles.}
  \label{eu_ang}
  
\end{figure}

We can now formally define the 3D quasiconformality (3DQC), $\mathbf{q}$, of the bijective mapping $f$ as follows:

\begin{definition}
    Let $f: \Omega_1 \to \Omega_2$ be a bijective mapping between two oriented and simply connected open subsets $\Omega_1, \Omega_2 \subset \mathbb{R}^3$. The \textit{3D quasiconformality (3DQC)}, $\mathbf{q}:\Omega_1\to \mathbb{R}^6$, of $f$ is defined as:
    \[
    \mathbf{q}(p) = (a, b, c, \theta_x, \theta_y, \theta_z),
    \]
    where:
    \begin{itemize}
        \item $a \geq b \geq c > 0$ are the singular values of the Jacobian matrix $\textbf{J}_f(p)$, representing the principal geometric dilations of $f$ at each point $p \in \Omega_1$.
        \item $\theta_x, \theta_y, \theta_z$ are the Euler angles, representing the rotational properties of $f$.
    \end{itemize}
\end{definition}

\medskip

The \textit{3D quasiconformality (3DQC)}, $\mathbf{q}$, captures the geometric properties of the mapping $f$. Specifically:
\begin{itemize}
    \item The dilation factors $a$, $b$, and $c$ describe the local scaling of the mapping along the principal axes. See Figure~\ref{dilation}.
    \item The Euler angles $\theta_x$, $\theta_y$, and $\theta_z$ encode the direction of the principal axes. See Figure~\ref{eu_ang}. 
\end{itemize}

\subsection{Relationship between $\mathbf{q}$ and $f$}

This subsection describes the connection between the proposed 3D conformality $\mathbf{q}$ and its corresponding mapping $f$, as stated in the following theorem.

\begin{theorem}
  Let $f: \Omega_1 \to \Omega_2$ be a bijective mapping, and let $\mathbf{q}$ be its corresponding 3DQC. For every $p \in \Omega_1$, suppose $\mathbf{q} = (a, b, c, \theta_x, \theta_y, \theta_z)$, and let $f = (u, v, w)$ denote the coordinate functions of $f$. Then, we have 
  \begin{equation}
  \begin{aligned}
      \nabla \cdot \mathcal{A} \nabla u &= 0,\\
      \nabla \cdot \mathcal{A} \nabla v &= 0,\\
      \nabla \cdot \mathcal{A} \nabla w &= 0,
  \end{aligned}
  \label{PDE}
  \end{equation}
  where $\mathcal{A}$ is defined as
  \begin{equation}
      \mathcal{A} = W \begin{pmatrix}
          \frac{bc}{a} & 0 & 0\\
          0 & \frac{ac}{b} & 0\\
          0 & 0 & \frac{ab}{c}
      \end{pmatrix} W^{-1},
      \label{matrix_A}
  \end{equation}
  and $W$ is the rotation matrix determined by $\theta_x$, $\theta_y$, and $\theta_z$, as depicted in (\ref{rotation_matrix}).
\end{theorem}

\begin{proof}
According to (\ref{polar_de_equ}), we have
\begin{equation}
\textbf{J}_f(p)^T \textbf{J}_f(p) = W \Sigma \Sigma W^{-1} = W
\begin{pmatrix}
    a^2 & 0 & 0\\
    0 & b^2 & 0\\
    0 & 0 & c^2
\end{pmatrix} 
W^{-1},
\label{DTD_detQ}
\end{equation}
where $\Sigma$ is the diagonal matrix of singular values. Thus, $\det(\textbf{J}_f(p)) = \det(\Sigma) = abc$. Since $f$ is a diffeomorphism, $\textbf{J}_f(p)$ is invertible. Right-multiplying both sides of the above equation by $\textbf{J}_f(p)^{-1}$, we obtain
\begin{equation}
    \textbf{J}_f(p)^T = W 
    \begin{pmatrix}
        a^2 & 0 & 0\\
        0 & b^2 & 0\\
        0 & 0 & c^2
    \end{pmatrix}
    W^{-1} \frac{1}{\det(\textbf{J}_f(p))} C = W 
    \begin{pmatrix}
        \frac{a}{bc} & 0 & 0\\
        0 & \frac{b}{ac} & 0\\
        0 & 0 & \frac{c}{ab}
    \end{pmatrix}
    W^{-1} C,
    \label{right_mul_inv_D}
\end{equation}
where $C$ is the adjugate matrix of $\textbf{J}_f(p)$. 

For any invertible matrix $M$, we know:
\begin{equation}\label{observation_r}
    M^{-1} = \begin{pmatrix}
        \text{---}&\vec{r}_1^T&\text{---}\\
        \text{---}&\vec{r}_2^T&\text{---}\\
        \text{---}&\vec{r}_3^T&\text{---}\\
    \end{pmatrix}^{-1}=\frac{1}{\det(M)} \begin{pmatrix}
        |&|&|\\
        \vec{r}_2\times\vec{r}_3&\vec{r}_3\times\vec{r}_1&\vec{r}_1\times\vec{r}_2\\
        |&|&|
    \end{pmatrix},
\end{equation}
where $\vec{r}_1$, $\vec{r}_2$, and $\vec{r}_3$ are the rows of $M$. Hence, $C$ can be expressed as:
\begin{equation}
    C = 
    \begin{pmatrix}
        | & | & | \\
        \nabla v \times \nabla w & \nabla w \times \nabla u & \nabla u \times \nabla v \\
        | & | & |
    \end{pmatrix},
    \label{matrix_C}
\end{equation}
where $\nabla u$, $\nabla v$, and $\nabla w$ are the columns of $\textbf{J}_f(p)^T$. 

Rearranging the terms in (\ref{right_mul_inv_D}), we have:
\begin{equation}
    W 
    \begin{pmatrix}
        \frac{bc}{a} & 0 & 0\\
        0 & \frac{ac}{b} & 0\\
        0 & 0 & \frac{ab}{c}
    \end{pmatrix}
    W^{-1} \textbf{J}_f(p)^T = C.
    \label{A_gradient_C}
\end{equation}

We can rewrite (\ref{A_gradient_C}) in the following form:
\begin{equation}
    \mathcal{A} 
    \begin{pmatrix}
        | & | & | \\
        \nabla u & \nabla v & \nabla w \\
        | & | & |
    \end{pmatrix} = 
    \begin{pmatrix}
        | & | & | \\
        \nabla v \times \nabla w & \nabla w \times \nabla u & \nabla u \times \nabla v \\
        | & | & |
    \end{pmatrix}.
    \label{A_gradient_C_rewrite}
\end{equation}

For $\nabla u$, we compute:
\begin{equation}
    \nabla \cdot \mathcal{A} \nabla u = \nabla \cdot (\nabla v \times \nabla w) = \nabla w \cdot (\nabla \times \nabla v) - \nabla v \cdot (\nabla \times \nabla w) = 0,
\end{equation}
since the curl of a gradient is always zero, i.e., $\nabla \times \nabla = \mathbf{0}$. The same result holds for $\nabla v$ and $\nabla w$.

Finally, taking the divergence on both sides of (\ref{A_gradient_C_rewrite}), we obtain (\ref{PDE}), which completes the proof.
\end{proof}

\bigskip

\noindent {\bf Remark:} {\it For any point $p\in \mathcal{W} \subseteq \Omega_1$, (\ref{PDE}) can be rewritten as
\begin{equation}
\begin{aligned}
    \lim_{|\mathcal{W}|\rightarrow 0} \frac{1}{|\mathcal{W}|}\Phi(\mathcal{A}\nabla u,\partial \mathcal{W})&=0\\
    \lim_{|\mathcal{W}|\rightarrow 0} \frac{1}{|\mathcal{W}|}\Phi(\mathcal{A}\nabla v,\partial \mathcal{W})&=0\\
    \lim_{|\mathcal{W}|\rightarrow 0} \frac{1}{|\mathcal{W}|}\Phi(\mathcal{A}\nabla w,\partial \mathcal{W})&=0
\end{aligned}
\label{PDE_div_form}
\end{equation}
where $|\mathcal{W}|$ denotes the volume of $\mathcal{W}$, $\partial \mathcal{W}$ is the boundary surface of $\mathcal{W}$, $\hat{n}$ is the outward unit normal to $\partial \mathcal{W}$, and $\Phi(\textbf{F}, \partial \mathcal{W})$ is the flux of the vector field $\mathbf{F}$ across the surface $\partial \mathcal{W}$, which has the form
\begin{equation}
    \Phi(\mathbf{F},\partial \mathcal{W})=\oint_{\partial \mathcal{W}} \mathbf{F}\cdot \hat{n} dS
\end{equation}

\noindent This formulation will be useful for the implementation of the reconstruction algorithm.
}

\section{Discretisation and implementation}\label{sec:discrete}

This section describes the implementation of the proposed framework in the discretized domain. Specifically, this section explains how to compute the 3DQC from a given 3D mapping and how to reconstruct the corresponding 3D mapping from its associated 3DQC.

For the 2D case, the domain is typically triangulated to obtain finite elements, and a piecewise linear function is defined on the triangulated domain to approximate a smooth mapping. This approach generalizes naturally to 3D by defining a piecewise linear function on tetrahedra, which serve as the finite elements.

Consider a 3D simply connected domain $\Omega_1$, discretized into a tetrahedral mesh with vertices $\mathcal{V} = \{p_1, p_2, \ldots, p_n\}$ and tetrahedra represented by the index set $\mathcal{F} = \{(i_1, i_2, i_3, i_4) \mid i_1, i_2, i_3, i_4 \in \mathbb{N} \cap [1, n], \text{ and } i_1, i_2, i_3, i_4 \text{ are distinct}\}$. Additionally, the order of the vertices in a tetrahedron $T = (i_1, i_2, i_3, i_4)$ must satisfy the following orientation condition: the determinant of the matrix formed by the three vectors $(p_{i_2} - p_{i_1})$, $(p_{i_3} - p_{i_1})$, and $(p_{i_4} - p_{i_1})$ must be positive. Equivalently, this can be expressed as:
\[
[(p_{i_2} - p_{i_1}) \times (p_{i_3} - p_{i_1})] \cdot (p_{i_4} - p_{i_1}) > 0.
\]

\subsection{Compute 3DQC from a mapping}

In the above setting, the mapping $f: \Omega_1 \rightarrow \Omega_2$ is piecewise linear. To compute the 3DQC $\mathbf{q}$ for $f$, we need to compute $\mathbf{q}$ at each point in $\Omega_1$. Since $f$ is piecewise linear, the Jacobian matrices $\textbf{J}_{f}$ at all points within the same tetrahedron are equal, resulting in the same polar decomposition and 3DQC. Therefore, it suffices to compute the Jacobian matrix for each tetrahedron $T$, denoted as $\textbf{J}_{f}(T)$, and use it to determine $\mathbf{q}$. In other words, $\mathbf{q}$ is constant within each tetrahedron $T$. The linear function defined on each tetrahedron $T$ can be expressed as:
\begin{equation}
    f|_T(x, y, z) = \textbf{J}_{f}(T) \cdot \begin{pmatrix}
        x\\y\\z
    \end{pmatrix} + \begin{pmatrix}
        \vartheta_T\\\iota_T\\\varrho_T
    \end{pmatrix},
\end{equation}
where $\textbf{J}_{f}(T)$ is the Jacobian matrix of $f$ on $T$, and $\begin{pmatrix} \vartheta_T \\ \iota_T \\ \varrho_T \end{pmatrix}$ is a translation vector specific to $T$. The detailed numerical algorithm to compute the 3DQC of $f$ can be summarized as in Algorithm~\ref{alg:cap}.

\begin{algorithm}
    \caption{3DQC computation for 3D bijective map}\label{alg:cap}
    \begin{algorithmic}
        \Require $\mathcal{V} = \{p_1, p_2, \ldots, p_n\}$; $f(\mathcal{V}) = \{s_1, s_2, \ldots, s_n\}$; $\mathcal{F} = \{(i_1, i_2, i_3, i_4)\mid i_1, i_2, i_3, i_4 \in \mathbb{N}\cap [1,n], \text{ and } i_1, i_2, i_3, i_4 \text{ are distinct}\}$
        \Ensure 3DQC $\mathbf{q}(T)$ for $T \in \mathcal{F}$
        \For{$T \in \mathcal{F}$}
            \State Compute the Jacobian matrix $\textbf{J}_f(T)$ with $\mathcal{V}$, $f(\mathcal{V})$ and $T$
            \State $P \leftarrow \sqrt{\textbf{J}_f(T)^T \textbf{J}_f(T)}$
            
                \State Compute eigen decomposition: $P=WDW^T$
                \State $\Sigma \leftarrow \sqrt{D}$
                \State $a, b, c \leftarrow \Sigma_{11}, \Sigma_{22}, \Sigma_{33}$ \Comment{By convention, we assume $a\geq b \geq c$.}
                \If{$det(W)=-1$}
                    \State Update $W$ by multiplying the first column by $-1$.
                \EndIf
                \State $\theta_x \leftarrow atan2(r_{32}, r_{33})$ \Comment{$r_{ij}$ denotes the entry in the i row j column of $W$}
                \State $\theta_y \leftarrow atan2(-r_{31}, \sqrt{r_{32}^2+r_{33}^2})$
                \State $\theta_z \leftarrow atan2(r_{21}, r_{11})$
                \vspace{1em}
                \State $\mathbf{q}(T) \leftarrow (a, b, c, \theta_x, \theta_y, \theta_z)$
        \EndFor
    \end{algorithmic}
\end{algorithm}

\subsection{Reconstruction algorithm}

To reconstruct the mapping \( f \), the matrix \( \mathcal{A} \) is required, which encodes the 3D representation information. Given the representation \( \mathbf{q} \) at a point \( p \), the matrix \( \mathcal{A} \) at \( p \) can be computed according to (\ref{matrix_A}). Since every point within a tetrahedron \( T \) shares the same dilation, \( \mathcal{A} \) is a piecewise-constant matrix-valued function, where \( \mathcal{A} \) takes a constant matrix value within each tetrahedron. This constant matrix is denoted as \( \mathcal{A}_T \).

To construct the discrete version of (\ref{PDE_div_form}), the discrete gradient operator is derived. Consider a tetrahedron \( T = (i_1, i_2, i_3, i_4) \) with vertices \( p_I = (x_I, y_I, z_I) \) in \( \mathcal{V} \), and their corresponding mapped points \( s_I = f(p_I) = (u_I, v_I, w_I) \), where \( I \in \{i_1, i_2, i_3, i_4\} \). The relationship between the Jacobian matrix \( \mathbf{J}_f(T) \), the geometry of the tetrahedron, and the mapping \( f \) is captured by the equation:
\begin{equation}
    \mathbf{J}_f(T) X = Y,
\end{equation}
where
\begin{equation}
    X = \begin{pmatrix}
        | & | & | \\
        p_{i_2} - p_{i_1} & p_{i_3} - p_{i_1} & p_{i_4} - p_{i_1} \\
        | & | & |
    \end{pmatrix}, \quad
    Y = \begin{pmatrix}
        | & | & | \\
        s_{i_2} - s_{i_1} & s_{i_3} - s_{i_1} & s_{i_4} - s_{i_1} \\
        | & | & |
    \end{pmatrix}.
\end{equation}

Here, \( X \) encodes the geometric structure of the tetrahedron in \( \Omega_1 \), while \( Y \) represents its mapped counterpart in \( \Omega_2 \). Moving the matrix \( X \) to the right gives:
\begin{equation}
    \mathbf{J}_f(T) = Y X^{-1},
    \label{move_x2right}
\end{equation}
provided that \( X \) is invertible, which is guaranteed by the orientation condition.

The coefficients in \( X^{-1} \) are indexed using the notation \( \chi_{jk} \), where \( j \) and \( k \) are the row and column of the indexed entry, respectively. The coefficients are defined as:
\begin{equation}
\begin{aligned}
    A_T^{i_1} = -(\chi_{11} + \chi_{21} + \chi_{31}), \quad A_T^{i_2} = \chi_{11}, \quad A_T^{i_3} = \chi_{21}, \quad A_T^{i_4} = \chi_{31}, \\
    B_T^{i_1} = -(\chi_{12} + \chi_{22} + \chi_{32}), \quad B_T^{i_2} = \chi_{12}, \quad B_T^{i_3} = \chi_{22}, \quad B_T^{i_4} = \chi_{32}, \\
    C_T^{i_1} = -(\chi_{13} + \chi_{23} + \chi_{33}), \quad C_T^{i_2} = \chi_{13}, \quad C_T^{i_3} = \chi_{23}, \quad C_T^{i_4} = \chi_{33}.
\end{aligned}
\end{equation}
Using these, the discrete gradient operator \( \nabla_T \) is composed of:
\begin{equation}
    \partial_x = (A_T^{i_1}, A_T^{i_2}, A_T^{i_3}, A_T^{i_4})^T, \quad
    \partial_y = (B_T^{i_1}, B_T^{i_2}, B_T^{i_3}, B_T^{i_4})^T, \quad
    \partial_z = (C_T^{i_1}, C_T^{i_2}, C_T^{i_3}, C_T^{i_4})^T.
\end{equation}

The Jacobian matrix \( \mathbf{J}_f(T) \) is expressed as:
\begin{equation}
    \label{Jacobian}
    \mathbf{J}_f(T) = \begin{pmatrix}
        \partial_x \cdot u_T & \partial_y \cdot u_T & \partial_z \cdot u_T \\
        \partial_x \cdot v_T & \partial_y \cdot v_T & \partial_z \cdot v_T \\
        \partial_x \cdot w_T & \partial_y \cdot w_T & \partial_z \cdot w_T
    \end{pmatrix}.
\end{equation}

The divergence operator, defined in a connected domain \( \mathcal{W} \), is approximated in the discrete setting by replacing it with the flux computation. The flux is written as:
\begin{equation}
    \Phi_{p_i}(\mathbf{F}, \partial (\cup \mathcal{N}(p_i))) = \sum_{T \in \mathcal{N}(p_i)} \text{Area}(T_{\setminus i}) \cdot \mathbf{F}(T) \cdot \hat{n}_{T_{\setminus i}},
    \label{flux}
\end{equation}
where \( T_{\setminus i} \) denotes the triangle opposite to \( p_i \), \( \hat{n}_{T_{\setminus i}} \) is the outward unit normal to \( T_{\setminus i} \), and \( \mathbf{F}(T) \) is a piecewise-constant vector field defined on \( T \).

Substituting the derived expressions for \( \mathcal{A}_T \nabla_T u \), \( \mathcal{A}_T \nabla_T v \), and \( \mathcal{A}_T \nabla_T w \) into the flux equation gives the following system:
\begin{equation}
\begin{aligned}
    \sum_{T \in \mathcal{N}(p_i)} \text{Vol}(T) \cdot (A_T^i, B_T^i, C_T^i)^T \cdot \mathcal{A}_T \nabla_T u &= 0, \\
    \sum_{T \in \mathcal{N}(p_i)} \text{Vol}(T) \cdot (A_T^i, B_T^i, C_T^i)^T \cdot \mathcal{A}_T \nabla_T v &= 0, \\
    \sum_{T \in \mathcal{N}(p_i)} \text{Vol}(T) \cdot (A_T^i, B_T^i, C_T^i)^T \cdot \mathcal{A}_T \nabla_T w &= 0.
\end{aligned}
\label{key_linear_system}
\end{equation}

Reorganizing (\ref{key_linear_system}) leads to three linear systems:
\begin{equation}
    \mathcal{C}u = 0, \quad \mathcal{C}v = 0, \quad \mathcal{C}w = 0.
    \label{key_linear_system_mat}
\end{equation}
The matrices \( \mathcal{C} \) depend on the source mesh (\( \mathcal{V} \) and \( T \)) and the 3D quasiconformal representation (\( \mathcal{A}_T \)).

To introduce boundary conditions, a boundary vector \( \vec{v} \) for the function \( u \) is defined as:  
\[
    \vec{v} = \sum_{i=1}^{\mathcal{K}} \beta_i \vec{e}_i,
\]
where \( \vec{e}_i \) is a "one-hot" vector, and \( \beta_i \) represents the target value for the boundary vertex \( p_i \). A temporary vector \( \tilde{h} = -\mathcal{C} \vec{v} \) is computed, and its entries are updated to form \( h_u \). Similarly, \( h_v \) and \( h_w \) are computed for \( v \) and \( w \), respectively.

A "masking" operation is then applied to the matrix \( \mathcal{C} \). Rows and columns corresponding to boundary vertices are set to zero, except for their diagonal entries, which are set to 1. This operation produces the matrices \( \mathcal{C}_u \), \( \mathcal{C}_v \), and \( \mathcal{C}_w \).

The final system of equations becomes:
\begin{equation}
    \mathcal{C}_u u = h_u, \quad \mathcal{C}_v v = h_v, \quad \mathcal{C}_w w = h_w,
    \label{key_linear_system_mat_uvw}
\end{equation}
from which the mapping \( f \) is reconstructed. The conjugate gradient method is applied to solve these linear systems, as shown in section~\ref{sec:analysis}, where it is proven that the matrices \( \mathcal{C}_u \), \( \mathcal{C}_v \), and \( \mathcal{C}_w \) are symmetric positive definite.

The reconstruction algorithm is summarized in Algorithm~\ref{alg:recon}.

\begin{algorithm}
    \caption{Mapping reconstruction from 3DQC}\label{alg:recon}
    \begin{algorithmic}
        \Require $\mathcal{V} = \{p_1, p_2, \ldots, p_n\}$; $\mathcal{F} = \{(i_1, i_2, i_3, i_4)\mid i_1, i_2, i_3, i_4 \in \mathbb{N}\cap [1,n], \text{ and } i_1, i_2, i_3, i_4 \text{ are distinct}\}$;  
        $\mathbf{q}(T)$ for $T \in \mathcal{F}$. 
        \Ensure $f({\mathcal{V}})$
        \For{$T \in \mathcal{F}$}
            \State Reorganize $\vec{q}(T)$ to get the matrix $P$
            \State Compute eigenvalue decomposition of $P$ to get $a, b, c$ and $W$
            \State Compute $\mathcal{A}_T$ by (\ref{matrix_A}) with $a, b, c$ and $W$
            \State Construct $X$ and compute $X^{-1}$, from which $\{A_T^i, B_T^i, C_T^i\mid i\in T\}$ can be determined
        \EndFor
        Build the $|\mathcal{V}|\times|\mathcal{V}|$ linear systems for $\Phi_{p_i}(\mathcal{A}_T\nabla_T\textbf{f}, \partial \bigl(\cup \mathcal{N}(p_i)\bigr))=0$ for $\textbf{f}\in\{u, v, w\}$
        \State Solve the system to obtain $f({\mathcal{V}})$ using the conjugate gradient method
    \end{algorithmic}
    \label{code_recon}
\end{algorithm}

\section{Analysis}\label{sec:analysis}
This section explains why the linear systems in (\ref{key_linear_system_mat_uvw}) are symmetric positive definite (SPD). Unlike the triangular meshes used in 2D cases, tetrahedral meshes involve a much larger number of vertices, leading to significantly larger and more complex linear systems. These systems are more challenging to solve, but if they are SPD, efficient numerical methods like the conjugate gradient method can be applied to solve them effectively.

The discussion is organized into two parts: the first establishes the symmetry of $\mathcal{C}$, and the second confirms the positive definiteness of the linear systems.

\subsection{\texorpdfstring{Symmetry of $\mathcal{C}$}{Symmetry of C}}

This subsection aims to show that the matrix $\mathcal{C}$ in (\ref{key_linear_system_mat}) is symmetric. To do so, the incidence matrix of a tetrahedral mesh will firstly be examined.

The incidence matrix $E$ of a tetrahedral mesh is a matrix that describes the connectivity between the vertices and the tetrahedrons in the mesh. Specifically, it captures which vertices are part of which tetrahedrons. In our setting, the incidence matrix is an $|\mathcal{F}| \times |\mathcal{V}|$ matrix, where $|\mathcal{F}|$ is the number of tetrahedrons and $|\mathcal{V}|$ is the number of vertices. Each row corresponds to a tetrahedron, and each column corresponds to a vertex. If a vertex belongs to a tetrahedron, the corresponding entry in the matrix is set to $1$; otherwise, it is set to $0$.

Suppose a discrete function $f$ is defined on the vertices of the mesh. To sum the function values at the vertices of each tetrahedron, the computation can be performed as $Ef$, which produces the desired results. For a weighted sum, the entries can be replaced by the weights before computing $Ef$.

In this context, the goal is to construct a matrix capable of computing the discrete gradient for each tetrahedron. As explained in (\ref{Jacobian}), the gradient of $u_T$ over $x$ is computed by taking the dot product $\partial_x \cdot u_T$. In the implementation, $u$ is represented as a $|\mathcal{V}|$-dimensional vector. To compute $\partial_x \cdot u_T$ for all tetrahedrons, the first step is to construct the incidence matrix $E$ for the mesh. Replacing the values in $E$ with the corresponding $A_T^i$ produces the matrix $E_x$. The gradient over $x$ for all tetrahedrons is then computed as $E_xu$. Similarly, $E_y$ and $E_z$ can be constructed by replacing the values in $E$ with $B_T^i$ and $C_T^i$, respectively, to compute the gradients of $u_T$ over $y$ and $z$.

A $3|\mathcal{F}| \times |\mathcal{V}|$ matrix can now be constructed. For this matrix, the $(3i+1)$-th row (where $i$ ranges from $0$ to $|\mathcal{F}|-1$) is replaced with the $i$-th row of $E_x$, the $(3i+2)$-th row with the $i$-th row of $E_y$, and the $(3i+3)$-th row with the $i$-th row of $E_z$. This matrix is denoted as $\nabla$, functioning as the discrete gradient operator. The matrix $\nabla$ can be viewed as an $|\mathcal{F}|\times 1$ block matrix, where each block computes the gradient vector for the corresponding tetrahedron $T\in\mathcal{F}$. Each block contains only four nonzero columns, indexed by $i \in T$, with values $(A_T^{i}, B_T^{i}, C_T^{i})^T, i \in T$.

Next, consider the matrix $E^T$, which is $|\mathcal{V}| \times |\mathcal{F}|$. For any discrete function $g$ defined on the tetrahedrons of the mesh, $E^Tg$ sums the values over the tetrahedrons $T \in \mathcal{N}(p)$ surrounding each vertex $p$. Weighted sums can similarly be computed by replacing the values in $E^T$ with the appropriate weights.
In our case, $\nabla^T$ has a similar structure to $E^T$. Suppose we have a discrete vector field defined on the tetrahedrons of the mesh, we can put the vectors into a $3|\mathcal{F}|$-dimensional vector $\mathbf{F}$ according to the index of the tetrahedrons in $\mathcal{F}$. Then, computing $\nabla^T\mathbf{F}$ is equivalent to computing $\sum_{T\in \mathcal{N}(p_i)}(A_T^{i}, B_T^{i}, C_T^{i})^T \cdot \textbf{F}(T)$.

Furthermore, (\ref{key_linear_system}) includes a $Vol(T)$ term. A $3|\mathcal{F}| \times 3|\mathcal{F}|$ block diagonal matrix $G$ can be defined, where each diagonal block is $Vol(T) \cdot I$, a $3 \times 3$ matrix. It is evident that $G$ is symmetric positive definite (SPD).

The final component is a $3|\mathcal{F}| \times 3|\mathcal{F}|$ block diagonal matrix $\mathcal{A}$, with each diagonal block denoted as $\mathcal{A}_T$. Since the 3DQC assigned to each tetrahedron corresponds to an SPD matrix, the eigenvalues of this matrix are positive, and these eigenvalues also define the eigenvalues of $\mathcal{A}_T$. Thus, $\mathcal{A}_T$ is SPD, which ensures that $\mathcal{A}$ is SPD as well.

Combining these components, the matrix $\mathcal{C}$ is expressed as $\mathcal{C} = \nabla^T G \mathcal{A} \nabla$. Since $Vol(T) \cdot I \mathcal{A}_T$ is SPD, and $Vol(T) \cdot I$ and $\mathcal{A}_T$ are the diagonal blocks of $G$ and $\mathcal{A}$, respectively, it follows that $G \mathcal{A}$ is SPD. Consequently, $\mathcal{C}$ is a symmetric matrix. For simplicity, $G \mathcal{A}$ is denoted as $\mathcal{A}^G$ in the following section. Additionally, since $\mathcal{A}^G$ is SPD, there exists an invertible matrix $\mathcal{H}$ such that $\mathcal{A}^G = \mathcal{H}^T \mathcal{H}$.
\subsection{Positive definiteness}
In this section, we show that $\mathcal{C}_u$, $\mathcal{C}_v$, and $\mathcal{C}_w$ in the linear systems (\ref{key_linear_system_mat_uvw}) are SPD. For simplicity, the number of vertices is denoted by $N$ in this section (i.e., $N = |\mathcal{V}|$).

\begin{lemma}\label{rank_nabla_a_nabla}
$rank(\mathcal{C})=N-1$, and the vector $\Vec{1}=(1, 1, \ldots, 1)^T$ is the only basis vector in the null space of $\mathcal{C}$.
\end{lemma}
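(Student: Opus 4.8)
The plan is to reduce the kernel of $\mathcal{C}$ to the kernel of the discrete gradient operator $\nabla$, and then identify the latter with the constant functions. Using the factorization $\mathcal{A}^G = \mathcal{H}^T\mathcal{H}$ established above, I would write $\mathcal{C} = \nabla^T\mathcal{A}^G\nabla = (\mathcal{H}\nabla)^T(\mathcal{H}\nabla)$. Since $\ker(B^TB) = \ker(B)$ holds for any real matrix $B$ (indeed $B^TBx = 0$ gives $x^TB^TBx = \lVert Bx\rVert^2 = 0$, hence $Bx = 0$, and the reverse inclusion is trivial), applying this with $B = \mathcal{H}\nabla$ yields $\ker(\mathcal{C}) = \ker(\mathcal{H}\nabla)$; and because $\mathcal{H}$ is invertible, $\mathcal{H}\nabla x = 0$ if and only if $\nabla x = 0$. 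Thus $\ker(\mathcal{C}) = \ker(\nabla)$, and the whole statement reduces to proving $\ker(\nabla) = \mathrm{span}(\vec{1})$.

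First I would check that $\vec{1}\in\ker(\nabla)$. The block of $\nabla$ associated with a tetrahedron $T=(i_1,i_2,i_3,i_4)$ acts on $\vec{1}$ by summing its four nonzero columns, producing $\sum_{i\in T}(A_T^i, B_T^i, C_T^i)^T$. From the definitions $A_T^{i_1}=-(\chi_{11}+\chi_{21}+\chi_{31})$, $A_T^{i_2}=\chi_{11}$, $A_T^{i_3}=\chi_{21}$, $A_T^{i_4}=\chi_{31}$, the four $A$-terms cancel, and the same holds for the $B$- and $C$-terms. Hence every block vanishes, so $\nabla\vec{1}=0$ and $\mathcal{C}\vec{1}=0$.

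Next I would establish the reverse inclusion $\ker(\nabla)\subseteq\mathrm{span}(\vec{1})$. Suppose $\nabla x = 0$. On a fixed tetrahedron $T$, collecting the $A_T^i,B_T^i,C_T^i$ in terms of the columns of $X^{-1}$ shows that the gradient block evaluates to $(X^{-1})^T(x_{i_2}-x_{i_1},\, x_{i_3}-x_{i_1},\, x_{i_4}-x_{i_1})^T$. Because $T$ is nondegenerate, $X$ and hence $X^{-1}$ are invertible, so this block is zero if and only if $x_{i_1}=x_{i_2}=x_{i_3}=x_{i_4}$. Therefore $x$ is constant on every tetrahedron, i.e.\ it assigns equal values to any two vertices sharing a tetrahedron. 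Invoking connectedness of the tetrahedral mesh of the simply connected domain $\Omega_1$, any two vertices are joined by a chain of tetrahedra in which consecutive members share a vertex, and propagating the equality along such a chain forces $x$ to be globally constant, so $x\in\mathrm{span}(\vec{1})$.

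Combining the two inclusions gives $\ker(\mathcal{C}) = \mathrm{span}(\vec{1})$, a one-dimensional subspace of $\mathbb{R}^N$; by rank--nullity, $\mathrm{rank}(\mathcal{C}) = N-1$ and $\vec{1}$ is the unique (up to scaling) basis vector of the null space. I expect the main obstacle to be this last inclusion: both the algebraic identity expressing the gradient block through $X^{-1}$ (so that a vanishing gradient forces local constancy) and the graph-connectivity argument that upgrades local constancy to global constancy must be handled carefully, the latter relying on the implicit hypothesis that the mesh forms a connected tetrahedral complex.
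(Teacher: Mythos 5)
Your proof is correct and follows essentially the same route as the paper's: both reduce $\ker(\mathcal{C})$ to $\ker(\nabla)$ using the factorization $\mathcal{A}^G=\mathcal{H}^T\mathcal{H}$ and then identify $\ker(\nabla)$ with the constant vectors. If anything, your write-up is tighter at the two points the paper glosses over --- the clean $\ker(B^TB)=\ker(B)$ step replaces the paper's more roundabout column-space argument for uniqueness, and you supply both the local computation (the gradient block equals $(X^{-1})^T$ applied to the vertex differences, so nondegeneracy forces the four values to agree) and the mesh-connectivity propagation that the paper leaves implicit when asserting that $\nabla\vec{\alpha}=0$ implies $\vec{\alpha}=\gamma\vec{1}$.
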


\begin{proof}
    Matrix $\nabla$ computes the gradient of a discrete function $\vec{\alpha}$ on tetrahedrons. If $\nabla \vec{\alpha} = 0$ for some vector $\vec{\alpha}$, then by definition, the gradient of $\vec{\alpha}$ on every tetrahedron is zero, meaning that $\vec{\alpha} = \gamma\vec{1}, \Vec{1} = (1, 1, \ldots, 1)^T$ for some constant $\gamma$ on each tetrahedron. Therefore, $\Vec{1}$ is the only basis vector in the null space of $\nabla$. This shows that $\Vec{1}$ is in the null space of $\mathcal{C}$.
    
    Then we aim to prove the uniqueness. If there exists a nonzero vector $\vec{\zeta}$ such that $\nabla^T \mathcal{H}^T \vec{\zeta} = 0$, then $\vec{\zeta}$ is orthogonal to the row vectors of $\nabla^T \mathcal{H}^T$, or equivalently, $\vec{\zeta} \notin \textbf{C}(\mathcal{H} \nabla)$, where $\textbf{C}$ denotes the columns space. However, for any nonzero vector $\vec{\alpha}$ satisfying $\vec{\alpha}\neq \gamma\vec{1}$ for any constant $\gamma$, we have nonzero vector $\vec{\delta} = \mathcal{H} \nabla \vec{\alpha}$, $\vec{\delta} \in \textbf{C}(\mathcal{H} \nabla)$, indicating that $\nabla^T \mathcal{H}^T \vec{\delta} \neq 0$. In other words, for any nonzero vector $\vec{\alpha}\neq \gamma\vec{1}$, there does not exist a nonzero vector $\vec{\delta} = \mathcal{H} \nabla \vec{\alpha}$ such that $\nabla^T \mathcal{H}^T \vec{\delta} = 0$. Therefore, the null space of $\mathcal{C}$ is only span by $\Vec{1}$.
    
    Hence, $ \text{rank}(\mathcal{C}) = N-1 $.
\end{proof}

\begin{lemma}\label{linear_dependent}
Any column (or row) vector in $\mathcal{C}$ can be expressed as a linear combination of the rest $N-1$ column (or row) vectors. 
\end{lemma}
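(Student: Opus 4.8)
The plan is to obtain this lemma as an immediate corollary of Lemma \ref{rank_nabla_a_nabla} together with the symmetry of $\mathcal{C}$ established just above. The guiding observation is that the assertion ``every column is a linear combination of the others'' is simply a restatement of the fact that the $N$ columns of $\mathcal{C}$ satisfy a nontrivial linear relation with all coefficients nonzero — and such a relation is precisely what a null vector of $\mathcal{C}$ with no zero entries supplies. The vector $\vec{1}=(1,\ldots,1)^T$ identified in Lemma \ref{rank_nabla_a_nabla} is exactly of this form.

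First I would recall from Lemma \ref{rank_nabla_a_nabla} that $\mathcal{C}\vec{1}=\vec{0}$. Writing $\mathcal{C}$ through its columns $\vec{c}_1,\ldots,\vec{c}_N$, the product $\mathcal{C}\vec{1}$ is literally the sum of the columns, so the null relation reads $\sum_{j=1}^{N}\vec{c}_j=\vec{0}$ (equivalently, every row of $\mathcal{C}$ sums to zero). Rearranging, for any fixed index $k$ we get $\vec{c}_k=-\sum_{j\neq k}\vec{c}_j$, which exhibits the $k$-th column as an equally weighted linear combination of the remaining $N-1$ columns. This is consistent with $\mathrm{rank}(\mathcal{C})=N-1$: there is exactly one such relation, up to scaling.

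For the row version I would invoke the symmetry $\mathcal{C}=\mathcal{C}^T$ proved in the preceding subsubsection. Since $\mathcal{C}$ is symmetric, $\mathcal{C}^T\vec{1}=\mathcal{C}\vec{1}=\vec{0}$, so the sum of the row vectors of $\mathcal{C}$ also vanishes, and the same rearrangement expresses any row as a linear combination of the other $N-1$ rows. Equivalently, one may simply apply the column argument to $\mathcal{C}^T$, whose columns are the rows of $\mathcal{C}$.

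I do not expect any genuine obstacle: the substantive content — locating the null space of $\mathcal{C}$ and pinning down its dimension — was already carried out in Lemma \ref{rank_nabla_a_nabla}. The only points needing a moment's care are the bookkeeping identity that $\mathcal{C}\vec{1}$ coincides with the sum of the columns, and the use of symmetry to transfer the relation from columns to rows.
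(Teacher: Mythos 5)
Your proof is correct, and it is a genuinely different — and far more direct — route than the paper's. You read the relation straight off the null vector: since Lemma \ref{rank_nabla_a_nabla} gives $\mathcal{C}\vec{1}=\vec{0}$ and $\mathcal{C}\vec{1}$ is the sum of the columns, you get $\vec{c}_k=-\sum_{j\neq k}\vec{c}_j$ for every $k$, and symmetry transfers this to rows. The key point that makes this work is that $\vec{1}$ has no zero entries, so the coefficient of $\vec{c}_k$ in the null relation is nonzero and can be solved for; you use this implicitly and correctly. The paper instead diagonalizes $\mathcal{C}=Q\Lambda Q^T$, truncates to $\tilde{Q}^T$ (the first $N-1$ rows of $Q^T$), and shows via a Gram-determinant computation ($\det(\Gamma^T\Gamma)=\frac{1}{N}>0$) that \emph{any} $N-1$ of the $N$ columns of $\tilde{Q}^T$ form a basis of $\mathbb{R}^{N-1}$, from which the linear-combination coefficients are extracted. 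That argument establishes a slightly stronger structural fact about $\tilde{Q}^T$, but for the lemma as literally stated it is unnecessary machinery: your two-line rearrangement, together with the explicit coefficients (all equal to $-1$), suffices and is consistent with $\mathrm{rank}(\mathcal{C})=N-1$ guaranteeing that this is the unique relation up to scaling. No gap.
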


\begin{proof}
    Matrix $\mathcal{C}$ is symmetric and therefore diagonalizable. Let $\mathcal{C} = Q \Lambda Q^T $, where $ Q $ is an orthogonal matrix containing the eigenvectors as columns, and $ \Lambda $ is a diagonal matrix with the eigenvalues. By Lemma~\ref{rank_nabla_a_nabla}, $\text{rank}(\mathcal{C}) = N-1$, indicating that one of its eigenvalues is zero, and the rest are non-zero. Without loss of generality, let the last diagonal entry in $ \Lambda $ be 0, and the last column of $ Q $ be the vector $ \frac{\Vec{1}}{\|\Vec{1}\|}=\frac{1}{\sqrt{N}}\Vec{1} $. In this setting, the last row of $ Q^T $ is equal to $\frac{1}{\sqrt{N}}\Vec{1}^T $.

    Each column of $\mathcal{C}$ can be expressed as a linear combination of the columns in $ Q \Lambda $, or equivalently, the first $ N-1 $ columns in $ Q\Lambda $, as the last eigenvalue is $0$. The linear combination coefficients are stored in $ Q^T $.

    Let $ \tilde{Q} $ be the first $ N-1 $ columns of $ Q $, then $ \tilde{Q}^T $ is the first $ N-1 $ rows of $ Q^T $. Let $\varepsilon$ be the entry in the last row of $ Q^T $, then $\varepsilon = \frac{1}{\sqrt{N}} $. Denote the column vectors in $\tilde{Q}^T$ as $\vec{\gamma}_i$ for $ i = 1, 2, \ldots, N $, then since $Q^T$ is an orthogonal matrix, the column vectors are perpendicular to each other, and the length of each vector is $1$, we have:
    \begin{align*}
        \vec{\gamma}_i \cdot \vec{\gamma}_i &= 1-\varepsilon^2 = \frac{N-1}{N}=-\frac{1}{N}(1-N)\\
        \vec{\gamma}_i \cdot \vec{\gamma}_j &= 0-\varepsilon^2 = -\frac{1}{N} \quad \text{for} \quad i \neq j
    \end{align*}
    
    Randomly pick $ N-1 $ column vectors from $ \tilde{Q}^T $ to form the matrix $\Gamma$. Then we have:
    \begin{equation*}
        \Psi = \Gamma^T \Gamma = -\frac{1}{N} \begin{pmatrix}
        1-N & 1 & \ldots & 1 \\
        1 & 1-N & \ldots & 1 \\
        \vdots & \vdots & \vdots & \vdots \\
        1 & 1 & \ldots & 1-N
        \end{pmatrix} = -\frac{1}{N}(\mathcal{J} - N I)
    \end{equation*}
    where $ I $ is an $ (N-1) \times (N-1) $ identity matrix, and $ \mathcal{J} $ is a matrix full of 1s.

    Let $\tilde{\Psi} = \mathcal{J} - N I$. Suppose $\lambda$ is an eigenvalue of $\tilde{\Psi}$, then $\det(\tilde{\Psi} - \lambda I) = 0$, or $\det(\mathcal{J} - (N + \lambda) I) = 0$. The eigenvalue of $\mathcal{J}$ is $N-1$ on the eigenvector $\Vec{1}$, and 0 on the other eigenvectors. Therefore, the eigenvalues of $\mathcal{J} - (N + \lambda) I$ are $-1 - \lambda $ on $ \Vec{1}$, and $\lambda + N$ on the other eigenvectors. Hence:
    \begin{equation*}
        \det(\tilde{\Psi} - \lambda I) = (-1 - \lambda) \cdot (\lambda + N)^{N-2} = 0
    \end{equation*}
    
    Thus, $\lambda = -1$ with multiplicity $1$, or $\lambda = -N$ with multiplicity $N-2$, and:
    \begin{equation*}
        \det(\tilde{\Psi}) = -(-N)^{N-2}
    \end{equation*}
    which implies:
    \begin{equation*}
        \det(\Psi) = \det(-\frac{1}{N}\tilde{\Psi})= \left( -\frac{1}{N} \right)^{N-1} \cdot (-(-N)^{N-2}) = \frac{1}{N} > 0
    \end{equation*} 
    which indicates $\Psi$ is full rank, and $\Gamma$ is also full rank. Therefore, any $N-1$ column vectors in $ \tilde{Q}^T $ form a basis in the $(N-1)$-dimensional space, and the remaining vector can be expressed as a linear combination of the basis.

    Let $Q_{\Lambda}$ be the first $N-1$ columns of $Q\Lambda$ and let $\tilde{Q}_{\setminus i}^T = (\vec{\gamma}_1, \vec{\gamma}_2, \ldots, \vec{\gamma}_{i-1}, \vec{\gamma}_{i+1}, \ldots, \vec{\gamma}_{N})$. Then by the above proof, we know that there exists an $N-1$ dimensional nonzero vector $\Vec{\alpha}$ such that $\vec{\gamma}_i = \tilde{Q}_{\setminus i}^T \Vec{\alpha}$. Then we have $Q_{\Lambda} \vec{\gamma}_i = Q_{\Lambda} \tilde{Q}_{\setminus i}^T \Vec{\alpha}$. 
    
    In the left-hand side, $Q_{\Lambda} \vec{\gamma}_i$ is the $i$-th column of $\mathcal{C}$, while in the right-hand side, $Q_{\Lambda} \tilde{Q}_{\setminus i}^T$ is the matrix containing the remaining $N-1$ column vectors of $\mathcal{C}$. The equation $Q_{\Lambda} \vec{\gamma}_i = Q_{\Lambda} \tilde{Q}_{\setminus i}^T \Vec{\alpha}$ means that the $i$-th column vector of $\mathcal{C}$ can be expressed as a linear combination of the remaining $N-1$ column vectors of $\mathcal{C}$, and the coefficients are stored in $\Vec{\alpha}$. 
    
    As the index $i$ can be randomly chosen, we prove the statement for column vectors. Since the matrix is symmetric, this property can be extended to the row vectors as well.
\end{proof}

The $i$-th row and $i$-th column of $\mathcal{C}$ can be set to zero. From Lemma~\ref{linear_dependent}, it is known that the rank of this matrix remains $N-1$. The vector $\Vec{e}_i$ lies in the null space of this matrix. By setting the $i$-th diagonal entry to $1$, the matrix becomes full rank and is denoted as $\mathcal{M}$. 

\begin{proposition}\label{first_masking_SPD}
    $\mathcal{M}$ is an SPD matrix.
\end{proposition}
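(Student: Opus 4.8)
The plan is to verify symmetry (which is immediate) and then to establish positive definiteness through a direct analysis of the quadratic form $\vec{x}^T \mathcal{M} \vec{x}$, rather than relying only on the full-rank fact recorded just above the statement. Full rank together with symmetry alone would not rule out negative eigenvalues, so a sign argument is genuinely needed. Symmetry itself is clear: zeroing the $i$-th row and the $i$-th column of the symmetric matrix $\mathcal{C}$ keeps the result symmetric, and resetting the $(i,i)$ entry to $1$ preserves this. Hence the only real task is to show $\vec{x}^T \mathcal{M} \vec{x} > 0$ for every nonzero $\vec{x}$.

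For the quadratic form, let $\mathcal{C}_0$ denote the matrix obtained from $\mathcal{C}$ by zeroing its $i$-th row and column, so that $\mathcal{M} = \mathcal{C}_0 + \vec{e}_i \vec{e}_i^T$. Writing an arbitrary $\vec{x}$ as $\vec{x} = \vec{y} + x_i \vec{e}_i$ with the $i$-th entry of $\vec{y}$ equal to zero, I would use that both the $i$-th row and the $i$-th column of $\mathcal{C}_0$ vanish, i.e. $\mathcal{C}_0 \vec{e}_i = 0$ and $\vec{e}_i^T \mathcal{C}_0 = 0$; then all cross terms drop out and $\vec{x}^T \mathcal{C}_0 \vec{x} = \vec{y}^T \mathcal{C}_0 \vec{y}$. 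Since $\vec{y}$ has a zero $i$-th entry, restoring the masked row and column changes nothing, giving $\vec{y}^T \mathcal{C}_0 \vec{y} = \vec{y}^T \mathcal{C} \vec{y}$, while the rank-one term $\vec{e}_i \vec{e}_i^T$ contributes exactly $x_i^2$. This yields the clean identity
\begin{equation*}
    \vec{x}^T \mathcal{M} \vec{x} = \vec{y}^T \mathcal{C} \vec{y} + x_i^2 .
\end{equation*}

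The conclusion then follows from the factorization recorded earlier. Since $\mathcal{A}^G = \mathcal{H}^T \mathcal{H}$, we have $\mathcal{C} = \nabla^T \mathcal{A}^G \nabla = (\mathcal{H}\nabla)^T (\mathcal{H}\nabla)$, so $\vec{y}^T \mathcal{C} \vec{y} = \lVert \mathcal{H}\nabla \vec{y} \rVert^2 \geq 0$; combined with $x_i^2 \geq 0$ this already gives positive semidefiniteness. For strict positivity I would suppose $\vec{x}^T \mathcal{M} \vec{x} = 0$, which forces $x_i = 0$ (hence $\vec{y} = \vec{x}$) and $\vec{y}^T \mathcal{C} \vec{y} = 0$. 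The latter gives $\mathcal{H}\nabla \vec{y} = 0$, so $\vec{y}$ lies in the null space of $\mathcal{C}$, which by Lemma \ref{rank_nabla_a_nabla} equals $\mathrm{span}\{\vec{1}\}$; thus $\vec{y} = \gamma \vec{1}$. But the $i$-th entry of $\vec{y}$ is zero, forcing $\gamma = 0$, so $\vec{x} = \vec{y} = 0$, and $\mathcal{M}$ is SPD.

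I expect the only delicate point to be the reduction of the quadratic form, namely checking that masking row and column $i$ precisely removes the $i$-indexed contributions so that $\vec{x}^T \mathcal{M}\vec{x}$ splits as $\vec{y}^T \mathcal{C}\vec{y} + x_i^2$; once this identity is in hand, the nonnegativity comes from the positive-semidefinite factorization and the strict positivity from the null-space description of Lemma \ref{rank_nabla_a_nabla}, both of which are direct. Everything else is routine.
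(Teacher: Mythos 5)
Your proof is correct and follows essentially the same route as the paper's: both decompose $\vec{x}$ into its $i$-th component plus the rest, reduce the quadratic form to the identity $\vec{x}^T\mathcal{M}\vec{x}=\lVert\mathcal{H}\nabla\vec{y}\rVert_2^2+x_i^2$, and conclude that the two terms cannot vanish simultaneously. The only cosmetic difference is that the paper argues via the null space of the column-masked gradient $\tilde{\nabla}$, whereas you invoke Lemma \ref{rank_nabla_a_nabla} directly together with the vanishing $i$-th entry of $\vec{y}$; both arguments are sound and equivalent.
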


\begin{proof}
    Setting the $i$-th column in $\nabla$ to zero, we get a new matrix denoted as $\tilde{\nabla}$. By the above definition, we have
    \begin{equation*}
        \mathcal{M}=\tilde{\nabla}^T\mathcal{A}^G\tilde{\nabla}+\mathcal{E}
    \end{equation*}
    where $\mathcal{E}$ is an $N \times N$ matrix whose $i$-th diagonal entry is $1$, and all other entries are zero. The matrix $\tilde{\nabla}^T\mathcal{A}^G\tilde{\nabla}$ is the same matrix obtained by setting $i$-th row and $i$-th column of $\mathcal{C}$ to $0$, the rank of which is $N-1$. Then we decompose any nonzero vector $\vec{v}$ into $\vec{v}_{\setminus i}$ and $\eta \Vec{e}_i$, where $\eta$ is the value in the $i$-th entry of $\vec{v}$, and $\vec{v}_{\setminus i}$ denotes the vector with the $i$-th component removed.

    Then, we have
    \begin{align*}
        \vec{v}^T\mathcal{M}\vec{v}&=(\vec{v}_{\setminus i}+\eta \Vec{e}_i)^T(\tilde{\nabla}^T\mathcal{A}^G\tilde{\nabla}+\mathcal{E})(\vec{v}_{\setminus i}+\eta \Vec{e}_i)\\
        &=\vec{v}_{\setminus i}^T\tilde{\nabla}^T\mathcal{A}^G\tilde{\nabla}\vec{v}_{\setminus i}+\eta^2 \Vec{e}_i^T\mathcal{E}\Vec{e}_i\\
        &=\|\mathcal{H}\tilde{\nabla}\vec{v}_{\setminus i}\|_2^2+\eta^2
    \end{align*}

    Since $rank(\tilde{\nabla}^T\mathcal{A}^G\tilde{\nabla})=N-1$ and $\mathcal{A}^G$ is full rank, $rank(\tilde{\nabla})=N-1$. Together with the definition of $\tilde{\nabla}$, we know that its null space only includes $\Vec{e}_i$ and its row space contains the remaining components. Since $\vec{v}_{\setminus i}$ contains the complementary components of $\Vec{e}_i$, $\tilde{\nabla}\vec{v}_{\setminus i}$ is nonzero if $\vec{v}_{\setminus i}$ is nonzero. Otherwise, if $\vec{v}_{\setminus i}$ is a zero vector, then $\eta$ is nonzero because $\vec{v}$ is nonzero. Therefore, the two terms $\|\mathcal{H}\tilde{\nabla}\vec{v}_{\setminus i}\|_2^2$ and $\eta^2$ cannot be zero at the same time. Then $\vec{v}^T\mathcal{M}\vec{v}>0$, which proves the statement.
\end{proof}

For any SPD matrix $\mathcal{M}$, setting the $i$-th row and $i$-th column to zero and the $i$-th diagonal entry to $1$ produces a new matrix, denoted as $\mathcal{P}$. 
\begin{proposition}\label{remaining_op_SPD}
    $\mathcal{P}$ is also SPD.
\end{proposition}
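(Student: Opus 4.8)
The plan is to reuse the decomposition idea from the proof of Proposition~\ref{first_masking_SPD}. Symmetry of $\mathcal{P}$ is immediate, because zeroing the $i$-th row and the $i$-th column is a symmetric operation and resetting the $(i,i)$ entry keeps the matrix symmetric; hence I only need to establish positive definiteness. To that end I would write $\mathcal{P} = \mathcal{M}_{\setminus i} + \mathcal{E}$, where $\mathcal{M}_{\setminus i}$ denotes $\mathcal{M}$ with its entire $i$-th row and $i$-th column (the diagonal entry included) set to zero, and $\mathcal{E} = \vec{e}_i\vec{e}_i^T$ is the matrix whose only nonzero entry is a $1$ in position $(i,i)$. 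The two facts I will exploit are that $\mathcal{M}_{\setminus i}\vec{e}_i = 0$ and $\vec{e}_i^T\mathcal{M}_{\setminus i} = 0$, and that $\mathcal{E}$ acts only through the $\vec{e}_i$-component of a vector.

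Next, for an arbitrary nonzero $\vec{v}$ I would split it as $\vec{v} = \vec{v}_{\setminus i} + \eta\,\vec{e}_i$, where $\eta$ is the $i$-th entry of $\vec{v}$ and $\vec{v}_{\setminus i}$ has its $i$-th component removed. Expanding $\vec{v}^T\mathcal{P}\vec{v}$ and discarding every cross term (each vanishes by the orthogonality facts above) leaves the clean identity
\begin{equation*}
    \vec{v}^T\mathcal{P}\vec{v} = \vec{v}_{\setminus i}^T\,\mathcal{M}_{\setminus i}\,\vec{v}_{\setminus i} + \eta^2 .
\end{equation*}
The crucial observation is that $\vec{v}_{\setminus i}^T\mathcal{M}_{\setminus i}\vec{v}_{\setminus i}$ coincides with $\hat{v}^T\hat{\mathcal{M}}\hat{v}$, where $\hat{\mathcal{M}}$ is the $(N-1)\times(N-1)$ principal submatrix of $\mathcal{M}$ obtained by deleting row and column $i$, and $\hat{v}$ is the corresponding restriction of $\vec{v}_{\setminus i}$. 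Since any principal submatrix of an SPD matrix is itself SPD, this term is nonnegative and equals zero exactly when $\vec{v}_{\setminus i} = 0$.

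Finally, for nonzero $\vec{v}$ I would argue by cases: if $\vec{v}_{\setminus i}\neq 0$ the first term is strictly positive, while if $\vec{v}_{\setminus i}=0$ then $\eta\neq 0$ and the second term $\eta^2$ is strictly positive; in either case $\vec{v}^T\mathcal{P}\vec{v}>0$, which gives positive definiteness. The only nonroutine ingredient is the fact that principal submatrices inherit positive definiteness from $\mathcal{M}$, and this is where I expect the main (though mild) obstacle to lie; if one prefers to avoid citing it, it can be re-derived directly by restricting the quadratic form $\vec{x}^T\mathcal{M}\vec{x}>0$ to vectors supported off the $i$-th coordinate. Everything else is the bookkeeping that makes the cross terms cancel, which is straightforward once the $\mathcal{M}_{\setminus i}+\mathcal{E}$ splitting is in place.
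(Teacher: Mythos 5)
Your proof is correct, and it shares the paper's overall skeleton: the same splitting $\mathcal{P}=(\text{$\mathcal{M}$ with row/column $i$ zeroed})+\mathcal{E}$, the same decomposition $\vec{v}=\vec{v}_{\setminus i}+\eta\vec{e}_i$, the same cancellation of cross terms yielding $\vec{v}^T\mathcal{P}\vec{v}=\vec{v}_{\setminus i}^T\mathcal{M}_{\setminus i}\vec{v}_{\setminus i}+\eta^2$, and the same two-case conclusion. Where you diverge is in justifying that the first term is strictly positive whenever $\vec{v}_{\setminus i}\neq 0$: the paper writes $\mathcal{M}=Q\Lambda Q^T$, zeroes the $i$-th row of $Q$ to get $\tilde{Q}$, expresses the term as $\|\sqrt{\Lambda}\,\tilde{Q}^T\vec{v}_{\setminus i}\|_2^2$, and argues that $\vec{e}_i$ spans the null space of $\tilde{Q}^T$; you instead identify $\vec{v}_{\setminus i}^T\mathcal{M}_{\setminus i}\vec{v}_{\setminus i}$ with the quadratic form of the $(N-1)\times(N-1)$ principal submatrix $\hat{\mathcal{M}}$ and invoke (or re-derive by restricting $\vec{x}^T\mathcal{M}\vec{x}>0$ to vectors supported off coordinate $i$) the fact that principal submatrices of SPD matrices are SPD. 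Your route is more elementary and shorter --- it needs no eigendecomposition and no rank/null-space bookkeeping --- while the paper's version keeps the argument structurally parallel to its proof of Proposition~\ref{first_masking_SPD}, where an analogous rank argument is genuinely needed because $\mathcal{C}$ there is only semidefinite. Both are complete; yours would be the cleaner one to present for this particular proposition.
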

\begin{proof}
    Since $\mathcal{M}$ is an SPD matrix, it can be decomposed to $Q\Lambda Q^T$ where $\Lambda$ contains the eigenvalues, which are all positive. 

    Then, by setting the $i$-th row of $Q$ to $0$, we get another matrix $\tilde{Q}$. Then we have
    \begin{equation*}
        \mathcal{P}=\tilde{Q}\Lambda \tilde{Q}^T+\mathcal{E}
    \end{equation*}
    where $\mathcal{E}$ is an $N \times N$ matrix whose $i$-th diagonal entry is $1$, and all other entries are zero.

    Then we decompose any nonzero vector $\vec{v}$ into $\vec{v}_{\setminus i}$ and $\eta \Vec{e}_i$, where $\eta$ is the value in the $i$-th entry of $\vec{v}$, and $\vec{v}_{\setminus i}$ denotes the vector with the $i$-th component removed.

    Then we have
    \begin{align*}
        \vec{v}^T\mathcal{P}\vec{v}&=(\vec{v}_{\setminus i}+\eta \Vec{e}_i)^T(\tilde{Q}\Lambda \tilde{Q}^T+\mathcal{E})(\vec{v}_{\setminus i}+\eta \Vec{e}_i)\\
        &=\vec{v}_{\setminus i}^T\tilde{Q}\Lambda \tilde{Q}^T\vec{v}_{\setminus i}+\eta^2 \Vec{e}_i^T\mathcal{E}\Vec{e}_i\\
        &=\|\Sigma \tilde{Q}^T \vec{v}_{\setminus i}\|_2^2+\eta^2
    \end{align*}
    where $\Sigma=\sqrt{\Lambda}$.
    Since $rank(Q)=N$, $rank(\tilde{Q})=N-1$. Apparently, we notice that the vector $\Vec{e}_i$ is the only basis in the null space of $\tilde{Q}^T$. Since $\vec{v}_{\setminus i}$ contains the complementary components of $\Vec{e}_i$, $\tilde{Q}^T\vec{v}_{\setminus i}$ is nonzero if $\vec{v}_{\setminus i}$ is nonzero. Otherwise, if $\vec{v}_{\setminus i}$ is a zero vector, then $\eta$ is nonzero because $\vec{v}$ is nonzero. Therefore, the two terms $\|\Sigma\tilde{Q}^T\vec{v}_{\setminus i}\|_2^2$ and $\eta^2$ cannot be zero at the same time. Then $\vec{v}^T\mathcal{P}\vec{v}>0$, which proves the statement.
\end{proof}

With the above Lemmas and Propositions, we can now show that
\begin{theorem}
    $\mathcal{C}_u$, $\mathcal{C}_v$, and $\mathcal{C}_w$ in the linear systems (\ref{key_linear_system_mat_uvw}) are SPD.
\end{theorem}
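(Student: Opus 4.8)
The plan is to prove the statement by separating the two requirements, symmetry and positive definiteness, and then to handle positive definiteness by an induction on the masking operations. Each of $\mathcal{C}_u$, $\mathcal{C}_v$, and $\mathcal{C}_w$ is obtained from the single matrix $\mathcal{C} = \nabla^T \mathcal{A}^G \nabla$ by applying the masking operation once for each boundary vertex of the corresponding function, so it suffices to treat a generic matrix obtained from $\mathcal{C}$ by a finite sequence of maskings, assuming at least one boundary vertex is present (which holds whenever boundary conditions are actually imposed; otherwise the system would remain singular).

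I would first dispose of symmetry. The matrix $\mathcal{C}$ is symmetric by the previous subsection, and a masking operation symmetrically zeroes the $i$-th row and the $i$-th column and sets a diagonal entry, so it preserves symmetry. Hence $\mathcal{C}_u$, $\mathcal{C}_v$, and $\mathcal{C}_w$ are symmetric.

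For positive definiteness, the key is that the sequence of maskings splits into a distinguished first step and the remaining steps. The base case is the very first masking applied to $\mathcal{C}$: by Lemma \ref{rank_nabla_a_nabla}, $\mathcal{C}$ has rank $N-1$ with null space spanned by $\vec{1}$, so it is only positive semidefinite, and Proposition \ref{first_masking_SPD} shows that a single masking turns it into the SPD matrix $\mathcal{M}$. The inductive step is then supplied by Proposition \ref{remaining_op_SPD}, which shows that masking an already-SPD matrix again yields an SPD matrix. Because maskings at distinct indices commute (they act on disjoint rows and columns except for the two off-diagonal entries both of them zero), the order in which the boundary vertices are processed is immaterial, so I may always arrange to invoke Proposition \ref{first_masking_SPD} for the first masking and Proposition \ref{remaining_op_SPD} for each subsequent one. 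Chaining these, the matrix is SPD after the first masking and stays SPD through all the rest, so $\mathcal{C}_u$, $\mathcal{C}_v$, and $\mathcal{C}_w$ are SPD.

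The main obstacle is the structural asymmetry between the first masking and the later ones: since $\mathcal{C}$ is singular, there is no single ``masking preserves positive definiteness'' statement that can be iterated from the start. The genuinely delicate step is showing that the first masking simultaneously eliminates the one-dimensional null direction $\vec{1}$ and produces strict positive definiteness, and this is exactly the content already established in Proposition \ref{first_masking_SPD}; once that is in hand, the theorem follows by the routine chaining described above.
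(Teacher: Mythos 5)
Your proposal is correct and follows essentially the same route as the paper: symmetry is preserved trivially by masking, the first masking is handled by Proposition \ref{first_masking_SPD} (building on Lemma \ref{rank_nabla_a_nabla}), and every subsequent masking preserves positive definiteness by Proposition \ref{remaining_op_SPD}. Your added remark that maskings at distinct indices commute is a harmless refinement the paper leaves implicit.
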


\begin{proof}
    In Lemma~\ref{rank_nabla_a_nabla}, $rank(\mathcal{C})=N-1$ is proved. 

    For the first masking operation, by Lemma~\ref{linear_dependent} and Proposition~\ref{first_masking_SPD}, we know that the resulting matrix is SPD. 
    
    For the remaining masking operations, by Proposition\ref{remaining_op_SPD}, we know that the resulting matrices are all SPD. Therefore, $\mathcal{C}_u$, $\mathcal{C}_v$, and $\mathcal{C}_w$ in the linear systems (\ref{key_linear_system_mat_uvw}) are SPD.
\end{proof}

\section{Applications of 3DQC in 3D Mapping Problems}\label{sec:app} Leveraging 3DQC, we propose two algorithms for the numerical interpolation and sparse modeling of 3D bijective mappings. In this section, we provide a detailed explanation of the two proposed algorithms.

\subsection{Numerical interpolation of 3D Bijective Mappings}
Numerical interpolation of 3D bijective mappings has numerous applications, such as 3D animation, dynamic mesh generation for moving interfaces, dynamic domain generation for partial differential equations (PDEs), and deformation analysis. Mathematically, given two 3D bijective mappings $f_1:\Omega_1 \to \Omega_2$ and $f_2:\Omega_1 \to \Omega_2$, the goal is to construct a homotopy $\mathcal{H}:\Omega_1 \times [0,1] \to \Omega_2$ such that $\mathcal{H}(\cdot, 0) = f_1$, $\mathcal{H}(\cdot, 1) = f_2$, and $\mathcal{H}(\cdot, t)$ remains bijective for all $t \in [0,1]$. The main challenge is ensuring the bijectivity of $\mathcal{H}(\cdot, t)$, as simply interpolating between the coordinate functions of $f_1$ and $f_2$ often results in the loss of this property. With the 3DQC, such a homotopy can be conveniently computed within the space of 3D bijective mappings $\mathcal{M}$.

Let $\mathbf{q}_1$ and $\mathbf{q}_2$ be the 3DQCs corresponding to the mappings $f_1$ and $f_2$, respectively. Both $\mathbf{q}_1$ and $\mathbf{q}_2$ are defined on each tetrahedron $T$ of the mesh discretizing $\Omega_1$. Specifically, each $\mathbf{q}_1(T) = (a^1(T), b^1(T), c^1(T), \theta^1_x(T), \theta^1_y(T), \theta^1_z(T))$ gives rise to a symmetric positive definite matrix $P_1(T)$. More precisely, 
\[
P_1(T) = W_1(T) \, \textbf{diag}([a^1(T), b^1(T), c^1(T)]) \, W_1(T)^{-1},
\]
\noindent where 
\begin{itemize}
    \item $\textbf{diag}([a^1(T), b^1(T), c^1(T)])$ is a diagonal matrix whose diagonal entries are given by $a^1(T), b^1(T), c^1(T)$ and;
    \item $W_1(T)  = R_z(\theta^1_z(T)) R_y(\theta^1_y(T)) R_x(\theta^1_x(T))$.
\end{itemize}

\medskip

We then define a map $P_1: \mathcal{F} \to S^3_{++}$, where: $\mathcal{F}$ denotes the collection of tetrahedrons in the mesh discretizing $\Omega_1$, and $S^3_{++}$ is the space of $3 \times 3$ symmetric positive definite matrices. $P_1$ describes the local geometric properties of the mapping $f_1$. Similarly, $\mathbf{q}_2(T) = (a^2(T), b^2(T), c^2(T), \theta^2_x(T), \theta^2_y(T), \theta^2_z(T))$ gives rise to a symmetric positive definite matrix $P_2(T)$, where
\[
P_2(T) = W_2(T) \, \textbf{diag}([a^2(T), b^2(T), c^2(T)]) \, W_2(T)^{-1},
\]
\noindent where $\textbf{diag}([a^2(T), b^2(T), c^2(T)])$ and $W_2(T)$ are defined similarly as:

\medskip

\begin{itemize}
\item $\textbf{diag}([a^2(T), b^2(T), c^2(T)])$ is a diagonal matrix whose diagonal entries are given by $a^2(T), b^2(T), c^2(T)$
    \item $W_2(T) = R_z(\theta^2_z(T)) R_y(\theta^2_y(T)) R_x(\theta^2_x(T))$.
\end{itemize}

\medskip

\noindent As in the case of $P_1$, the matrix $P_2$ describes the local geometric properties of the mapping $f_2$.

To compute a homotopy between $f_1$ and $f_2$, we employ a strategy that interpolates between the SPD matrices $P_1$ and $P_2$, rather than directly interpolating between the coordinate functions of $f_1$ and $f_2$. Specifically, we aim to construct a continuous map $\tilde{\mathcal{H}} : \Omega_1\times [0,1] \to M_{3 \times 3}(\mathbb{R})$ such that $\tilde{\mathcal{H}}(p,0) = P_1(p)$ and $\tilde{\mathcal{H}}(p,1) = P_2(p)$ for all $p\in \Omega_1$. Each intermediate matrix, denoted as $\tilde{\mathcal{H}}(\cdot, t)$, corresponds to a 3DQC $\mathbf{q}_t$. Using $\mathbf{q}_t$ and the prescribed boundary conditions, a 3D mapping $f_t$ is computed via Algorithm~\ref{alg:recon}. This mapping, $f_t$, provides a smooth interpolation between $f_1$ and $f_2$. To control the bijectivity of $f_t$, we enforce that matrix $\tilde{\mathcal{H}}(p,t)$ remains symmetric positive definite for all $p\in \Omega_1$ and $t \in [0,1]$.

To construct $\tilde{\mathcal{H}}$, we compute the geodesic between $P_1(p)$ and $P_2(p)$ using the log-Euclidean metric for every $p\in \Omega_1$. Specifically, the log-Euclidean metric defines the geodesic path $\tilde{\mathcal{H}}(p,t)$ between $P_1(p)$ and $P_2(p)$ as follows:  
\begin{equation}
\tilde{\mathcal{H}}(p,t) = \exp\big((1-t)\log(P_1(p)) + t\log(P_2(p))\big), \quad t \in [0,1],
\end{equation}
where $\log(\cdot)$ and $\exp(\cdot)$ denote the matrix logarithm and matrix exponential, respectively. 

One advantage of using the log-Euclidean metric is that it avoids the swelling effect, a phenomenon where interpolating SPD matrices produces intermediate matrices with unnecessarily large eigenvalues \cite{feragen2017geometries}. This inflation leads to distorted or ``swollen" results. Such swelling effects are common when employing metrics like the affine-invariant Riemannian metric, where interpolations may disproportionately emphasize certain directions. On the other hand, the log-Euclidean metric ensures that the eigenvalues of the intermediate SPD matrices evolve smoothly. As such, their corresponding 3D bijective mapping can evolve naturally without any artificial inflation. Furthermore, the log-Euclidean metric is computationally simple. This allows for the development of a numerical interpolation algorithm for 3D bijective mapppings, ensuring both efficiency and accuracy.

In summary, the numerical interpolation algorithm for 3D bijective mapping can be summarized as follows:

\begin{algorithm}
    \caption{Interpolation of 3D bijective mappings}
    \begin{algorithmic}
        \Require $f_1:\Omega_1 \to \Omega_2$; $f_2:\Omega_1 \to \Omega_2$; $t \in [0,1]$; Boundary conditions.
        \Ensure $f_t:\Omega_1 \to \Omega_2$
        \For{$T \in \Omega_1$}
            \State Compute $\mathbf{q}_1(T)$ and $\mathbf{q}_2(T)$ from $f_1$ and $f_2$ using Algorithm~\ref{alg:cap}
            \State Compute $P_1(T)$ and $P_2(T)$ using $\mathbf{q}_1(T)$ and $\mathbf{q}_2(T)$
            \State $P_t(T)=\exp\big((1-t)\log(P_1(T)) + t\log(P_2(T))\big)$
            \State Get the corresponding $\mathbf{q}_t(T)$ from $P_t(T)$
        \EndFor
        \State Compute $f_t$ using Algorithm~\ref{alg:recon} with $\mathbf{q}_t$ and the prescribed boundary conditions.
    \end{algorithmic}
    \label{code_interp3}
\end{algorithm}
\subsection{Sparse modelling of 3D Bijective Mappings}
Compared to 2D mapping, the storage requirements for 3D mappings are significantly higher. The rapid developments in deep learning have further increased the need to create and store large datasets of 3D mappings. This necessitates the development of methods for the sparse representation of 3D bijective mappings. In this subsection, we introduce our proposed algorithm for sparse modeling of 3D bijective mappings using 3DQC.

Our idea is similar to that in \cite{lui2013texture}, in which the author compresses 2D mapping using discrete Fourier transform. However, when it comes to 3D mapping, the number of vertices increases dramatically. When the distribution of the vertices is very uneven, one may need a very dense regular grid to preserve the information of the original mesh precisely, which will boost the computation. 

As we already have a mesh structure, it is a natural way to perform spectral analysis on the tetrahedral mesh directly. \cite{shuman2013emerging} introduces the idea of performing signal processing on a graph utilizing the graph Laplacian. In a discrete 2D surface with geometric information, signal processing can be applied with the Laplace-Beltrami operator, as introduced in \cite{rustamov2007laplace,crane2018discrete}. In the 3D case, the Laplace-Beltrami operator can also be computed using a 3D version cotangent formula \cite{crane2019n, Volumetric_david}. 

The 3D discrete Laplace-Beltrami operator is a matrix $L \in \mathbb{R}^{|\mathcal{V}|\times|\mathcal{V}|}$. Let $\Upsilon$ be the subset of $\mathcal{F}$ containing the tetrahedrons incident to an edge connecting $p_i$ and $p_j$, then the weight of this edge is computed as 
\begin{equation*}
    \omega_{ij}=\frac{1}{6}\sum_{T\in \Upsilon}\ell_{T_{\setminus ij}} \cot \theta^{ij}_{T_{\setminus ij}}
\end{equation*}
where $T_{\setminus ij}$ is the edge connecting the other two vertices in $T\setminus \{i,j\}$, $\ell_{T_{\setminus ij}}$ is the length of $T_{\setminus ij}$; $\theta^{ij}_{T_{\setminus ij}}$ is the interior dihedral angle of edge connecting $p_i$ and $p_j$.

Then the nonzero entries representing the edge connecting vertex $p_i$ and $p_j$ is $$L_{ij}=-\omega_{ij}$$
and diagonal element $$L_{ii} = \sum_{j=1}^{|\mathcal{V}|}\omega_{ij}$$ for each vertex $p_i$.

Then, by solving the generalized eigenvalue problem 
\begin{equation} \label{generalize_eigenvalue_problem}
    L\nu = \lambda M \nu
\end{equation}
where $M$ is the mass matrix whose diagonal $M_{ii}=\frac{1}{4}\sum_{T\in\mathcal{N}(p_i)}Vol(T)$, we obtain a series of eigenvalues $\{\lambda_1, \lambda_2, \dots ,\lambda_{|\mathcal{V}|}\}$ satisfying $\lambda_1 < \lambda_2 < \dots < \lambda_{|\mathcal{V}|}$, and their corresponding eigenvector $\{\nu_1, \nu_2, \dots, \nu_{|\mathcal{V}|}\}$ satisfying $\nu_i^TM\nu_i=1$ for all generalized eigenvectors $\nu_i$. 

Let $\tilde{\nu}_i=M^{\frac{1}{2}}\nu_i$, then $\nu_i=M^{-\frac{1}{2}}\tilde{\nu}_i$. Substituting it into (\ref{generalize_eigenvalue_problem}), we have $LM^{-\frac{1}{2}}\tilde{\nu}_i=\lambda M^{\frac{1}{2}}\tilde{\nu}_i$, which is equivalent to $M^{-\frac{1}{2}}LM^{-\frac{1}{2}}\tilde{\nu}_i=\lambda \tilde{\nu}_i$. Note that $M^{-\frac{1}{2}}LM^{-\frac{1}{2}}$ is a symmetric matrix, and $\tilde{\nu}_i$ is its eigenvector, meaning that $\tilde{\nu}_j^T \tilde{\nu}_i=0$ if $i\neq j$. Thus, $\nu_j^T M^{\frac{1}{2}}  M^{\frac{1}{2}}\nu_i=\nu_i^T M \nu_j=0$, meaning that the generalized eigenvectors of $L$ are $M$-orthogonal to each other. We have $V^TMV=I$ for matrix $V$ whose column vectors are the generalized eigenvectors. Thus we have $(V^TM)^{-1}=V$.

For any discrete function $h$ defined on the vertices of a tetrahedral mesh, the $M$-inner product between $h$ and eigenvector $\nu_i$ is defined as  
$$\langle h, \nu_i\rangle_{M} = h^T M \nu_i.$$

Then,  
$$\sum_i \langle h, \nu_i\rangle_{M} \nu_i = \sum_i (h^T M \nu_i)\nu_i = V(V^T M h) = V(V^T M)h = h,$$  
as $(V^T M)^{-1} = V.$

Therefore, we can project any function defined on a tetrahedral mesh to the spectral domain by computing the $M$-inner product with the generalized eigenvectors of the discrete Laplace-Beltrami operator. Also, the eigenvalues indicate the frequency of the eigenvectors: a larger eigenvalue corresponds to a higher frequency. With this property, we can perform low-pass or high-pass filtering on any function defined on the mesh.

In our case, suppose we have a pair of 3D meshes with the same number of vertices and the same triangulation. Together, these meshes form a mapping. Our goal is to find a sparse representation of its 3DQC in order to compress the mapping representation.

We start by constructing the discrete Laplace-Beltrami operator and the mass matrix for the source mesh. Then, we solve the generalized eigenvalue problem (\ref{generalize_eigenvalue_problem}) with the constraint $\nu_i^T M \nu_i = 1$ for all generalized eigenvectors $\nu_i$. This gives us the sorted eigenvalues and eigenvectors.

With the eigenvalues and eigenvectors in hand, we can begin filtering the 3DQC of the mapping. Recall that the 3DQC is a piecewise linear, vector-valued function defined on the tetrahedrons. We can treat this function as a combination of six scalar-valued functions and perform filtering on each function separately.

For each of the six functions, we first interpolate it to the vertices of the mesh. Then, for each eigenvector, we compute the $M$-inner product $\xi_i = \langle f, \nu_i \rangle_{M}$ to obtain the spectrum. Based on the spectrum, we can select a threshold $\mathbf{T} \leq |\mathcal{V}|$, saving only $\{\xi_i \mid i \in \mathbb{N} \cap [1, \mathbf{T}]\}$ and discarding the rest. A smaller $\mathbf{T}$ results in greater storage savings but reduces reconstruction accuracy.

To reconstruct the function, we retrieve the saved spectral components $\{\xi_i \mid i \in \mathbb{N} \cap [1, \mathbf{T}]\}$ from storage. Then, by computing  
\begin{equation}\label{decompress_equ}
\tilde{h} = \sum_{i=1}^{\mathbf{T}} \xi_i \nu_i,
\end{equation}  
we can reconstruct the vector-valued function representing the 3DQC. Subsequently, the mapping can be recovered by solving the linear systems (\ref{key_linear_system}) using the reconstructed 3DQC.

As a remark, one might question why we do not apply the scheme to the coordinate function directly. Our experiments, as shown in the following section, reveal that this idea leads to poor results that cannot preserve the bijectivity.

In summary, the sparse modelling algorithm for 3D bijective mapping can be summarized as Algorithm~\ref{code_compress}.

\begin{algorithm}[h!]
    \caption{Sparse modeling}
    \begin{algorithmic}
        \Require $\mathcal{V} = \{p_1, p_2, \ldots, p_n\}$; $\mathcal{F} = \{(i_1, i_2, i_3, i_4)\mid i_1, i_2, i_3, i_4 \in \mathbb{N}\cap [1,n], \text{ and } i_1, i_2, i_3, i_4 \text{ are distinct}\}$;$f(\mathcal{V}) = \{s_1, s_2, \ldots, s_n\}$; $\mathbf{T} \in \mathbb{N}\cap \left[0,|\mathcal{V}|\right]$.
        \Ensure spectral components $\{\xi_i\mid i \in \mathbb{N}\cap [1, \mathbf{T}]\}$.
        \State Compute the discrete Laplace-Beltrami operator $L$ and the mass matrix $M$
        \State Solve (\ref{generalize_eigenvalue_problem}) to get eigenvalues $\{\lambda_1, \lambda_2, \dots ,\lambda_{|\mathcal{V}|}\}$ and eigenvectors $\{\nu_1, \nu_2, \dots, \nu_{|\mathcal{V}|}\}$
        \State Compute the 3DQC $\mathbf{q}$ of $f$ using Algorithm~\ref{alg:cap}
        \For{each component $\tilde{\mathbf{q}} \in \mathbf{q}$}
            \For{$i \in \mathbb{N}\cap [1, \mathbf{T}]$}
                \State $\xi_i =\langle \tilde{\mathbf{q}}, \nu_i\rangle_{M} $
            \EndFor
        \EndFor 
    \end{algorithmic}
    \label{code_compress}
\end{algorithm}


\section{Experiments}\label{sec:experiments} 
Extensive experiments have been carried out to test the effectiveness of our proposed framework. In this section, experimental results will be presented. We first test the reconstruction algorithm to reconstruct the 3D bijective mapping from its corresponding 3DQC. We subsequently present experiments for the applications mentioned above.

\subsection{Demonstration for the reconstruction algorithm}

\afterpage{
\clearpage
\begin{figure}[t]
    \centering
    \begin{subfigure}[t]{0.32\textwidth}
        \centering
	    \includegraphics[width=\textwidth]{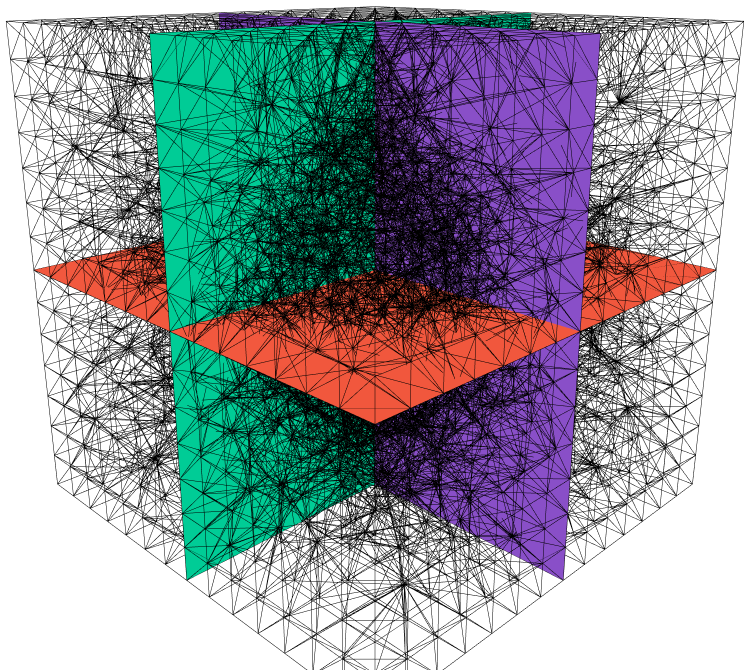}
        \caption{The source mesh.}
    \end{subfigure}
    \begin{subfigure}[t]{0.32\textwidth}
        \centering
        \includegraphics[width=\textwidth]{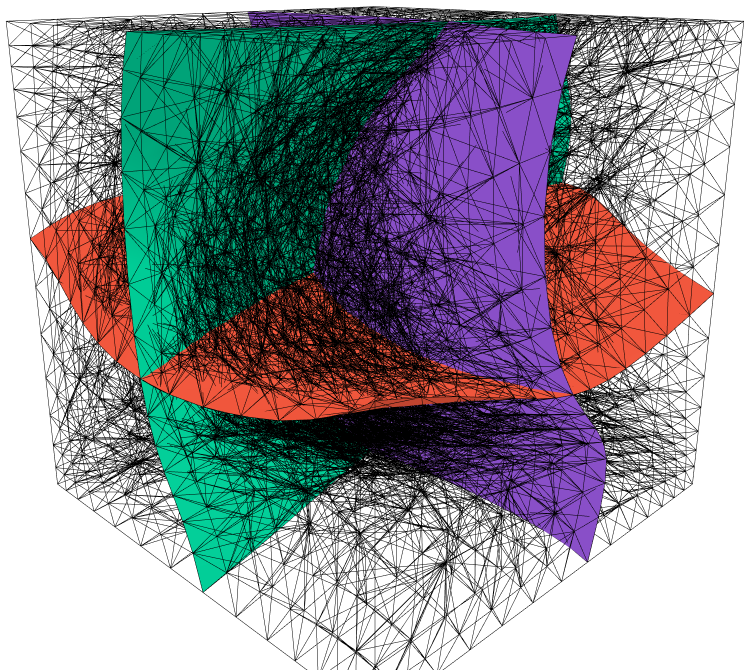} 
        \caption{The target mesh.}
    \end{subfigure}
    \begin{subfigure}[t]{0.32\textwidth}
        \centering
        \includegraphics[width=\textwidth]{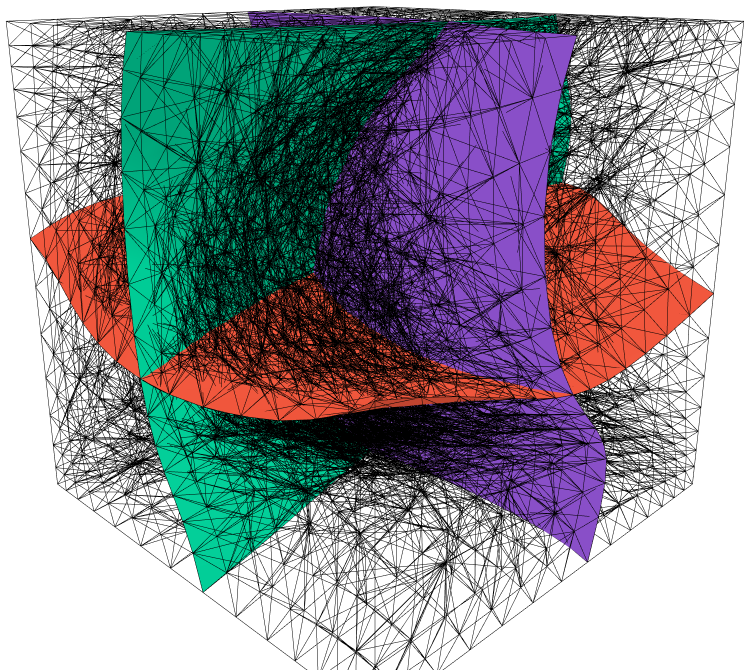} 
        \caption{The reconstructed mesh.}
    \end{subfigure}
    \caption{Mesh reconstruction result of mapping without folding. The reconstruction error is $7.56 \times 10^{-29}$.}
    \label{fig:box2_recon}
\end{figure}
\begin{figure}[t]
    \centering
    \begin{subfigure}[t]{0.32\textwidth}
        \centering
	    \includegraphics[width=\textwidth]{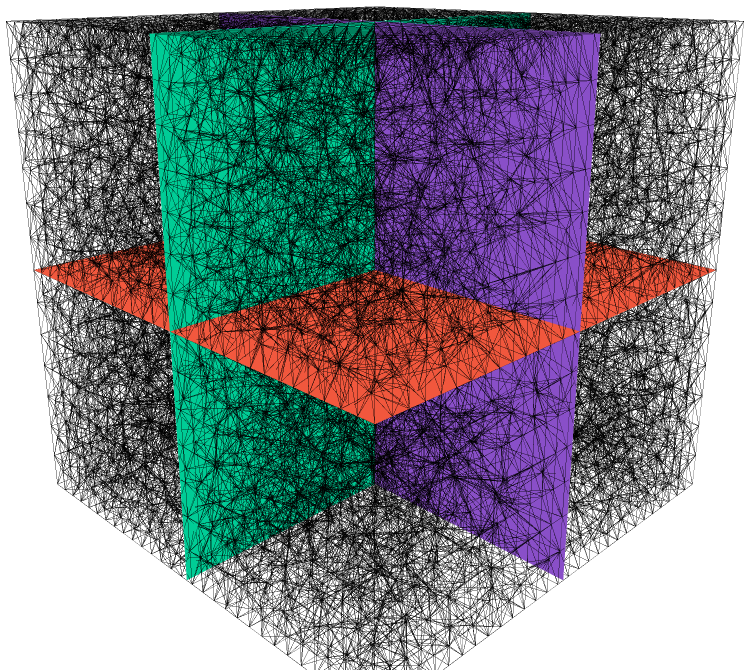}
        \caption{The source mesh.}
    \end{subfigure}
    \begin{subfigure}[t]{0.32\textwidth}
        \centering
        \includegraphics[width=\textwidth]{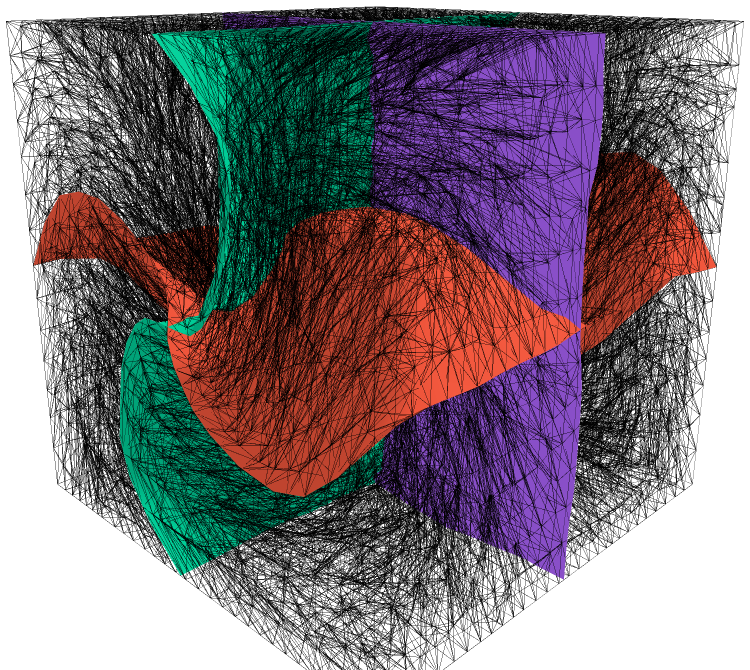} 
        \caption{The target mesh.}
    \end{subfigure}
    \begin{subfigure}[t]{0.32\textwidth}
        \centering
        \includegraphics[width=\textwidth]{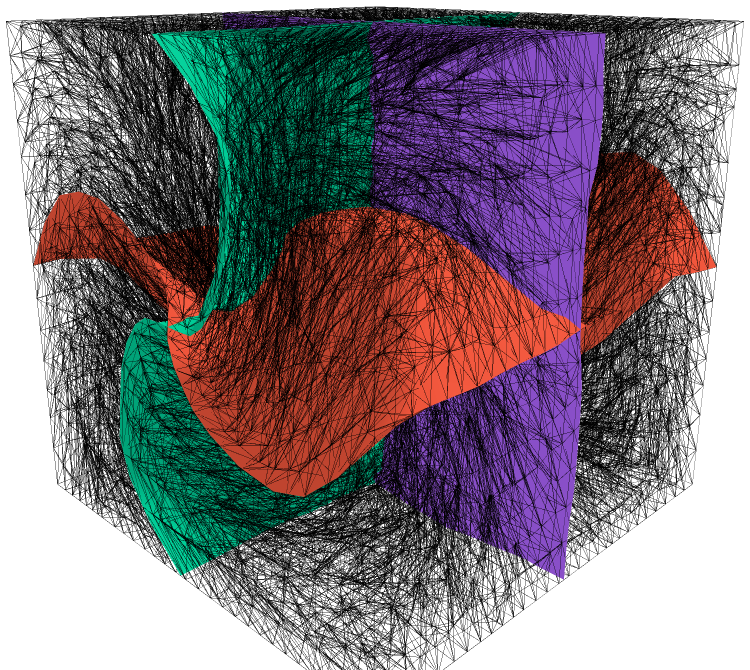} 
        \caption{The reconstructed mesh.}
    \end{subfigure}
    \caption{Mesh reconstruction result of mapping without folding. The reconstruction error is $2.71 \times 10^{-27}$.}
    \label{fig:box8_recon}
\end{figure}
\begin{figure}[t]
    \centering
    \begin{subfigure}[t]{0.32\textwidth}
        \centering
	    \includegraphics[width=\textwidth]{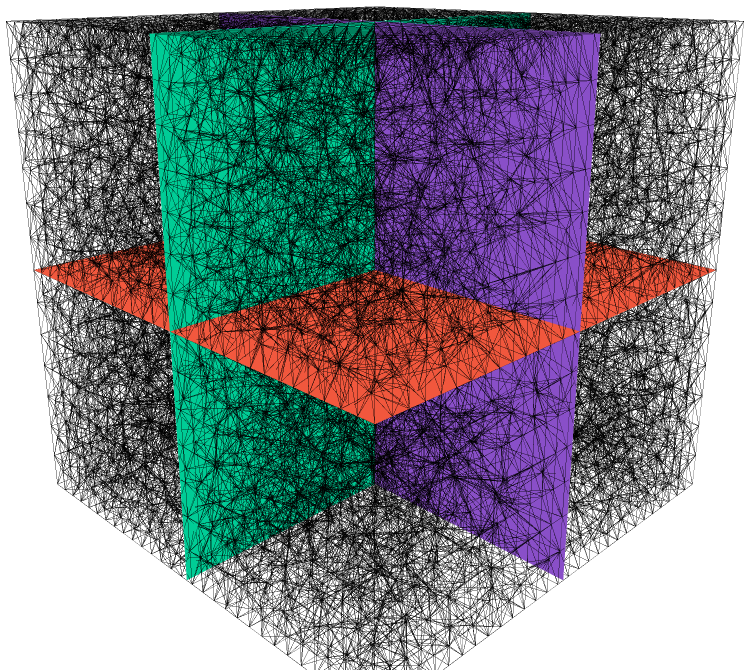}
        \caption{The source mesh.}
    \end{subfigure}
    \begin{subfigure}[t]{0.32\textwidth}
        \centering
        \includegraphics[width=\textwidth]{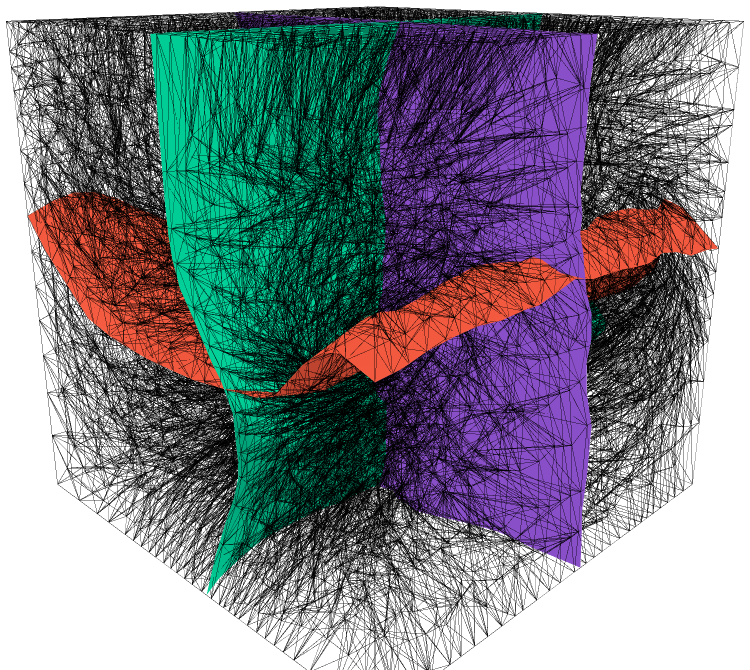} 
        \caption{The target mesh.}
    \end{subfigure}
    \begin{subfigure}[t]{0.32\textwidth}
        \centering
        \includegraphics[width=\textwidth]{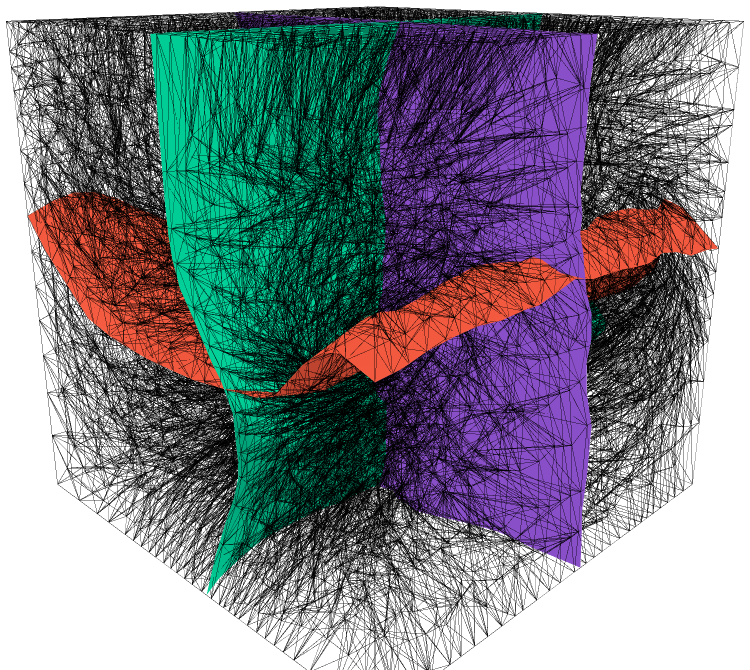} 
        \caption{The reconstructed mesh.}
    \end{subfigure}
    \caption{Mesh reconstruction result of mapping without folding. The reconstruction error is $6.15 \times 10^{-26}$.}
    \label{fig:box9_recon}
\end{figure}
\clearpage
}

In this subsection, we conduct experiments to test the effectiveness of Algorithm~\ref{alg:recon}. Given a 3D bijective mapping, we first compute its corresponding 3DQC. Algorithm~\ref{alg:recon} is then applied to reconstruct the mapping from the given 3DQC. To quantitatively measure the performance, we compute reconstruction errors determined by the $l^2$ norm between the original mappings and the reconstructed mappings.

In the following experiments, we use the same boundary condition that
\begin{equation*}
    f(p \in \mathcal{S}) \in \mathcal{S}, \quad \mathcal{S} \in \textit{BoundarySet}
\end{equation*}
where 
\begin{equation*}
    \textit{BoundarySet} = \bigl\{ \{(i, y, z)\}, \{(x, i, z)\}, \{(x, y, i)\} \mid i \in \{0, 1\}, x, y, z \in \mathbb{R} \bigr\}.
\end{equation*}

Figure~\ref{fig:box2_recon} shows one reconstruction result. Figure~\ref{fig:box2_recon}(a) shows the source mesh. The mesh is deformed by a bijective mapping to the target mesh as shown in Figure~\ref{fig:box2_recon}(b). The corresponding 3DQC is computed, from which the 3D mapping is reconstructed using Algorithm~\ref{alg:recon}. The reconstructed mapping is shown in Figure~\ref{fig:box2_recon}(c), which closely resembles the original mapping. The reconstruction error is $7.56 \times 10^{-29}$.

Figure~\ref{fig:box8_recon} and \ref{fig:box9_recon} show two more reconstruction results, where the corresponding mappings have larger deformations. Again, the reconstructed mappings closely resemble the original mappings. The reconstruction errors are $2.71 \times 10^{-27}$ and $6.15 \times 10^{-26}$, respectively.

\subsection{Applications}
In this section, we present the experimental results for numerical interpolation and sparse modelling of 3D bijective mappings.

\subsubsection{Experimental results for numerical interpolation}

\afterpage{
\clearpage
\begin{figure}[t]
    \centering
    \includegraphics[width=0.73\textwidth]{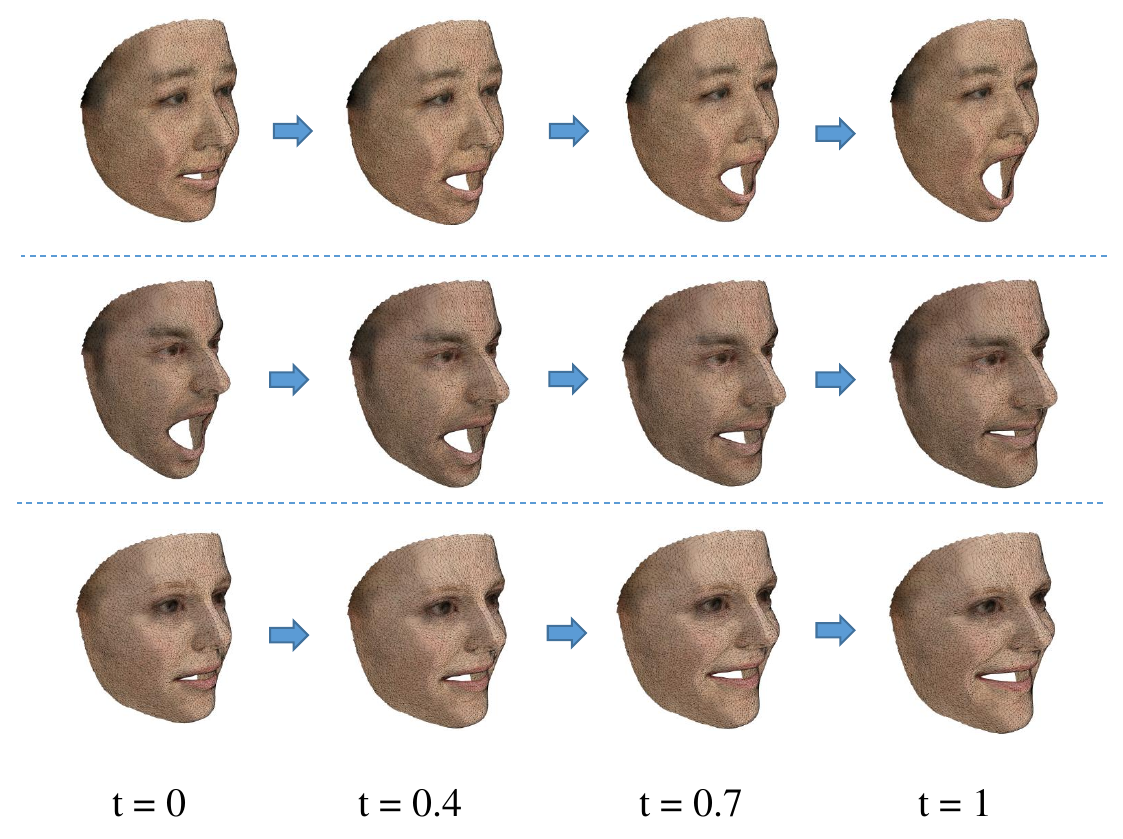}
    \caption{Experimental results for 3D animation interpolation. Each row contains an animation sequence of an object. The objects in the first and last columns are samples generated using 3DMM with different expression parameters, and the second and third columns are the interpolated 3D models for those in the first and last columns. The results show that our algorithm can successfully interpolate between human face meshes to get 3D animation sequences.}
    \label{fig:interp_face}
\end{figure}


\begin{figure}[t]
    \centering
    \includegraphics[width=0.73\textwidth]{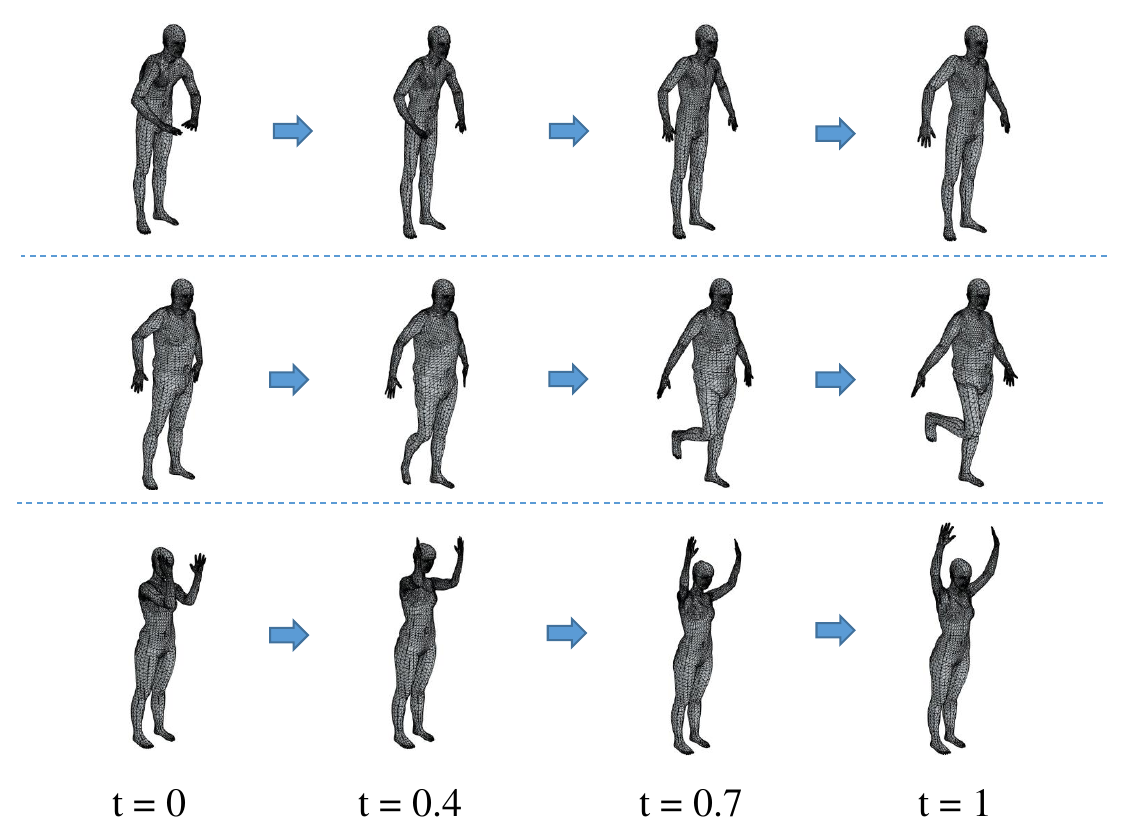}
    \caption{Experimental results for 3D animation interpolation. Each row contains an animation sequence of an object. The objects in the first and last columns are registered human scans sampled from the FAUST dataset, and the second and third columns are the generated 3D models interpolated between those in the first and last columns. The results show that our algorithm can successfully interpolate between large deformation human body scans to get 3D animation sequences.}
    \label{fig:interp_faust}
\end{figure}
\clearpage

\begin{figure}[t]
    \centering
    \includegraphics[width=0.95\textwidth]{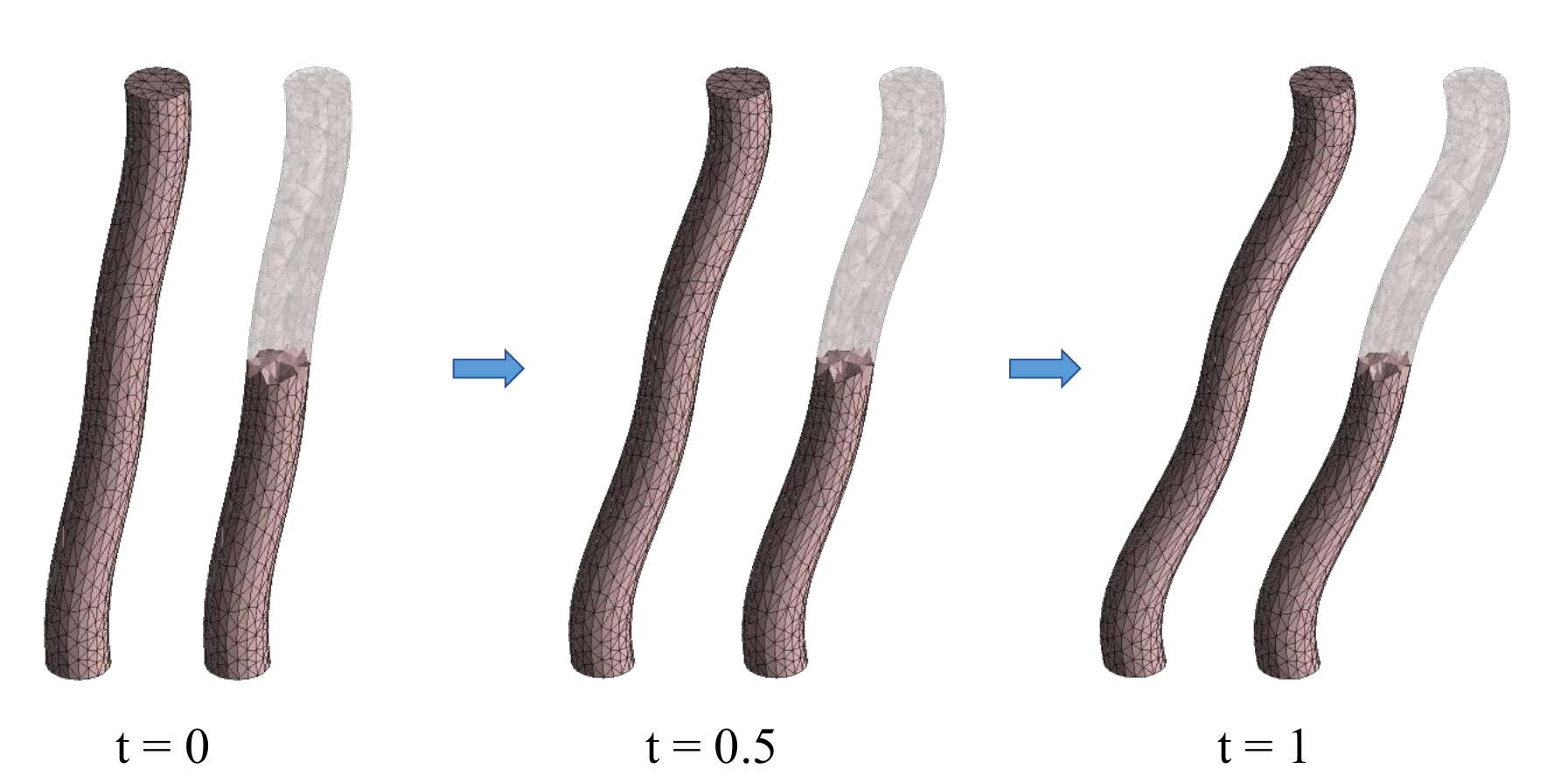}
    \caption{Experimental results for generating deformable domains of a solid blood vessel. For each time step, we show its outer surface and interior tetrahedral structure. The first and last vessels are tetrahedral meshes representing the first and last frames of the domain, and the second vessel is the generated domain interpolated between the first and last vessels. }
    \label{fig:interp_vessel}
\end{figure}

\begin{figure}[t]
  \centering
  \includegraphics[width=\textwidth]{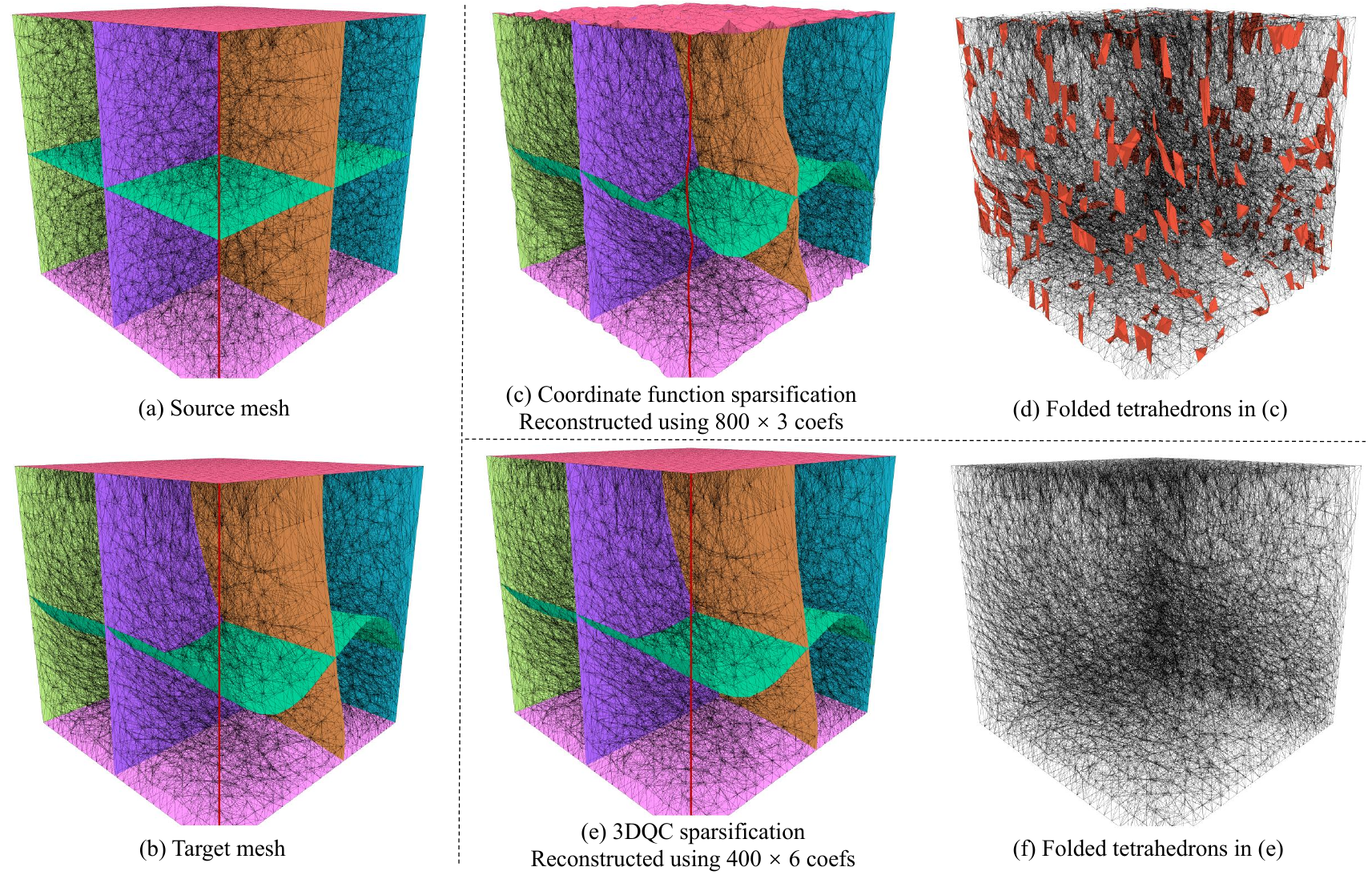}
  \caption{Comparison of coordinate function sparsification and 3DQC sparsification results. (a) and (b) represent the source and target meshes, respectively, forming a mapping. (c) and (e) depict the reconstruction results using 800 and 400 spectral coefficients per component for the coordinate function and 3D representation, respectively. It gives a memory reduction of 90.17\%. Red highlights in (d) and (f) indicate folded tetrahedra in (c) and (e). Notably, (c) contains 448 folded tetrahedra (marked in red in (d)), whereas none are present in (e). The MSE between the meshes in (b) and (c) is $7.95 \times 10^{-5}$, while that between (b) and (e) is $4.70 \times 10^{-5}$, showing a smaller error.}
  \label{box1}
\end{figure}

}

In this subsection, we examine the effectiveness of Algorithm~\ref{code_interp3} for numerical interpolation of 3D bijective mappings. Numerical interpolation of 3D bijective mappings is crucial for applications such as 3D animation or dynamic mesh generation.

In Figure~\ref{fig:interp_face}, we apply Algorithm~\ref{code_interp3} to interpolate human facial surfaces. The human facial meshes are generated using the 3D Morphable Model (3DMM) \cite{3DMM, paysan20093d}. In each row, we show the morphing between two human faces of the same person with different expressions. The initial human face at $t=0$ and the final human face at $t=1$ are embedded in a 3D cube. Landmark-matching quasiconformal registration is computed \cite{zhang2022unifying}. Algorithm~\ref{code_interp3} is used to compute the homotopy between the identity map and the registration map. The interpolated mappings are used to deform the initial facial mesh, which evolves from the initial face to the final face. The three rows show the interpolation of the human faces of three people. Results show that the numerical interpolation scheme generates a homotopy that gives facial surfaces with natural human expression.

We also apply Algorithm~\ref{code_interp3} to interpolate surfaces undergoing larger deformations. Figure~\ref{fig:interp_faust} shows the interpolation results for three human bodies undergoing large motion. In each row, the goal is to interpolate the human body from its initial shape at $t=0$ to its deformed shape at $t=1$. Algorithm~\ref{code_interp3} is used to compute the homotopy between the identity map and the registration map, which generates animation sequences of human body motion. Even with such large deformations, our proposed algorithm performs well, producing natural and smooth morphing between shapes.

In addition, we apply our algorithm to generate dynamic meshes for solving dynamic PDEs. In \cite{zhang2024meshless}, the authors propose an algorithm for blood flow simulations in elastic vessels, where the computational domain is deformable. By applying our proposed method, we can generate dynamic 3D meshes for blood vessels. Utilizing the numerical interpolation of 3DQC, intermediate domains can be generated based on the first and last frames of a blood vessel, as shown in Figure~\ref{fig:interp_vessel}.


\subsubsection{Experimental results for sparse modelling}

In the following experiments, we apply the above sparse modeling scheme to two 3D mappings. The experimental results for the first mapping are shown in Figure~\ref{box1} and ~\ref{fig:sparse_mse_ex1}, while the results for the second mapping are shown in Figure~\ref{box7} and ~\ref{fig:sparse_mse_ex2}.

Figure~\ref{box1} illustrates the sparse modeling of a mapping. The mapping is visualized as the deformation from the source mesh (a) to the target mesh (b). (c) shows the reconstructed result obtained by sparsifying the coordinate function, while (e) presents the reconstructed result using the proposed sparse modeling scheme. To visualize foldings, any folded tetrahedra in (c) and (e) are labeled in red and displayed in (d) and (f), respectively. From the results, we observe that the red edge in (c) deviates significantly from its original position in (b). Additionally, the four colored outer surfaces in (b) become turbulent when the sparse modeling scheme is applied directly to the coordinate function. In contrast, the four outer surfaces and the red edge in (e) remain in their original positions, as the proposed reconstruction algorithm incorporates boundary conditions. In this example, we use 400 coefficients for each channel of the 3DQC and 800 coefficients for each channel of the coordinate functions, yielding a memory reduction of approximately 90.23\%. (d) and (f) demonstrate that, with the same compression rate, the proposed algorithm significantly outperforms direct sparsification of the coordinate function in terms of preserving the bijectivity of the mapping.



The deformation of the mapping shown in Figure~\ref{box7} is significantly larger. In this example, it gives a memory reduction of 90.17\%. Compared with the mesh in (b), in addition to the deviation of the four colored outer surfaces and the red edge, we also observe that the three surfaces inside the cube are noticeably distorted. In contrast, the shape of the three surfaces in (e) is well-preserved, demonstrating the robustness of our algorithm. Furthermore, the results shown in (d) and (f) highlight the robustness of the proposed algorithm in preserving the bijectivity property, even in cases where the mapping undergoes large deformations.

Figure~\ref{fig:sparse_modeling_mse} shows the plots of mean square error (MSE) between the reconstructed mapping and the original mapping versus the number of coefficients used. The left plot corresponds to the mapping in Figure~\ref{box1}, while the right plot corresponds to the mapping in Figure~\ref{box7}. The reference error represents the mean square error between the source and target meshes.  From the plots, we observe that as more high-frequency coefficients are discarded, the error increases. More importantly, the curves demonstrate that the error approaches zero once a small number of coefficients are retained. This indicates that the proposed sparse modeling scheme can achieve a high compression ratio while maintaining a low reconstruction error.

In conclusion, the experimental results demonstrate the superiority of our proposed sparse modeling scheme in several aspects. First, it proves effective for compressing mappings defined on unstructured meshes. Second, the scheme allows for the incorporation of boundary conditions, which is crucial for preserving important geometric features. Lastly, it achieves sparsification of mappings with a high compression ratio, making it a powerful tool for efficient data representation.

\afterpage{
\clearpage
\begin{figure}[t]
  \centering
  \includegraphics[width=\textwidth]{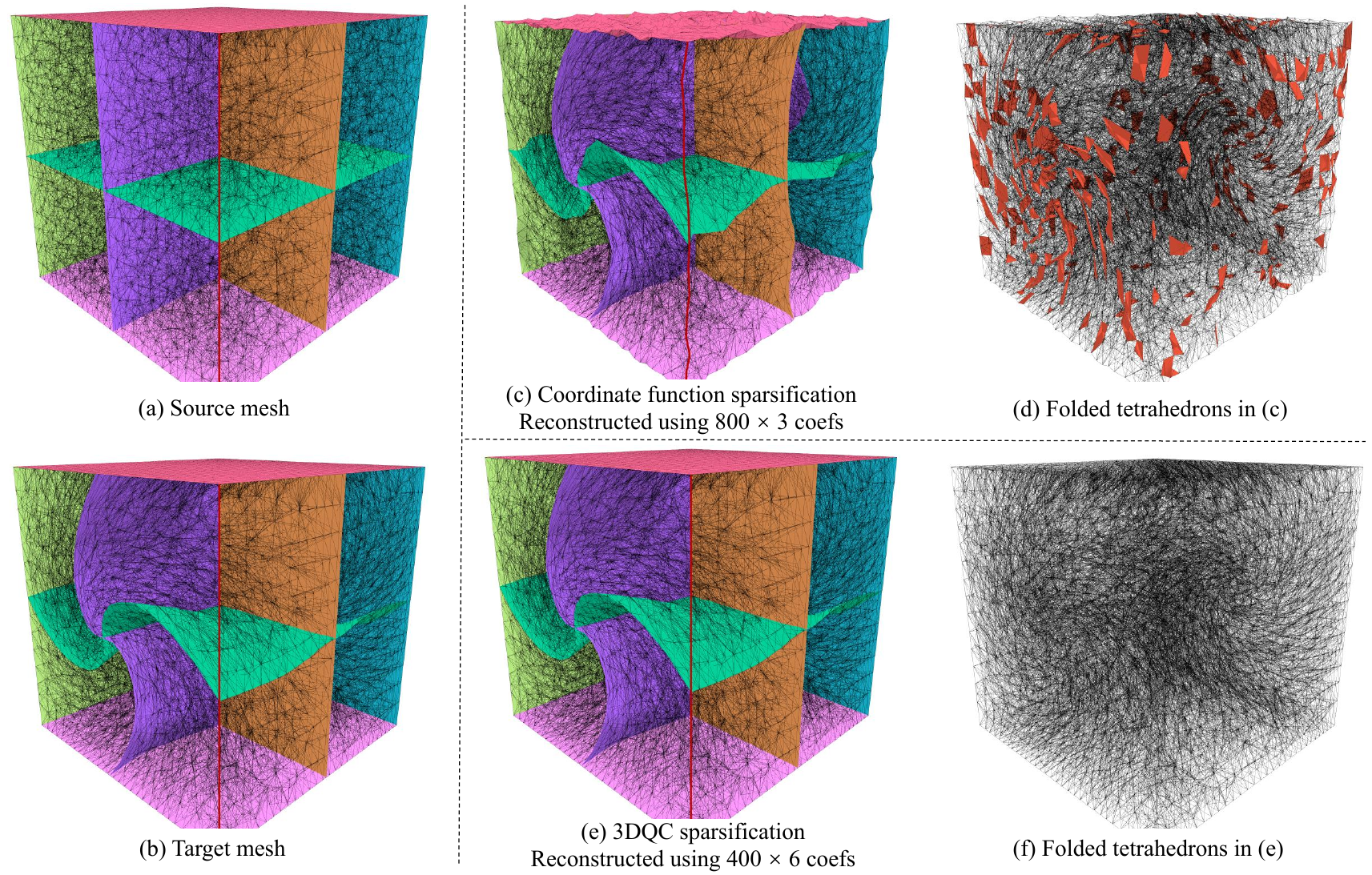}
  \caption{(a) and (b) represent the source and target meshes, respectively, forming a mapping. (c) and (e) depict the reconstruction results using 800 and 400 spectral coefficients per component for the coordinate function and 3D representation, respectively. It gives a memory reduction of 90.23\%. Red highlights in (d) and (f) indicate folded tetrahedra in (c) and (e). Notably, (c) contains 471 folded tetrahedra (marked in red in (d)), whereas none are present in (e). The MSE between the meshes in (b) and (c) is $4.65 \times 10^{-4}$, while that between (b) and (e) is $8.37 \times 10^{-6}$, showing a significantly smaller error.}
  \label{box7}
\end{figure}

\begin{figure}[t]
    \centering
    \begin{subfigure}[t]{0.47\textwidth}
        \centering
	    \includegraphics[width=\textwidth]{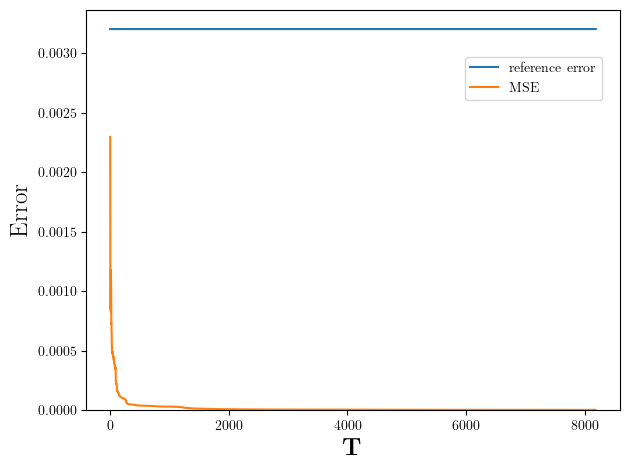}
        \caption{Error between mappings in Figure~\ref{box1}}
        \label{fig:sparse_mse_ex1}
    \end{subfigure}
    \begin{subfigure}[t]{0.47\textwidth}
        \centering
        \includegraphics[width=\textwidth]{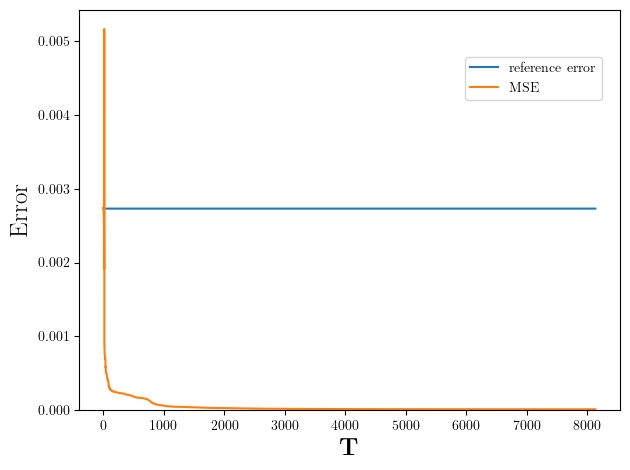} 
        \caption{Error between mappings in Figure~\ref{box7}}
        \label{fig:sparse_mse_ex2}
    \end{subfigure}
    \caption{Chart for MSE visualization for the errors in the two experiments of sparse modeling. The reference error is the MSE between the source mesh and the target mesh. The MSE line represents the MSE between the target mesh and the reconstructed mesh after discarding a certain number of frequency components. $\mathbf{T}$, which represents the number of retained frequency components, is shown on the x-axis.}
    \label{fig:sparse_modeling_mse}
\end{figure}
\clearpage
}

\section{Conclusion}\label{sec:conclusion}
In this paper, we introduced the concept of 3D quasiconformality (3DQC), which effectively captures and represents the local dilation and distortion of 3D mappings. Building on this concept, we developed a reconstruction algorithm that allows us to process mappings directly in the space of 3DQCs.

We validated the proposed 3DQC and reconstruction algorithm through extensive experiments. By reconstructing randomly generated discrete mappings on tetrahedral meshes with high accuracy, we demonstrated the precision of the reconstruction algorithm and the ability of 3DQC to encode distortion information effectively.

Additionally, we explored two applications enabled by 3D quasiconformality and the reconstruction algorithm. First, we proposed a numerical interpolation method for 3D bijective mappings, which generates smooth and bijective transitions between two bijective mappings. Second, we developed a sparse modeling framework for mappings on unstructured meshes, achieving high compression ratios for mappings with smooth deformations while allowing the incorporation of boundary conditions for better control.

For future work, we aim to improve the computational efficiency of the algorithm and extend its applicability to more complex deformations. We also plan to investigate potential applications of 3D quasiconformality in areas such as 3D image registration, image segmentation, and the analysis of deformations in medical imaging.

\bibliographystyle{IEEEtran}
\bibliography{mylib}

\end{document}